\newcommand{\np}{$\mathsf{NP}$}
\newcommand{\nph}{$\mathsf{NP}$-$\mathsf{hard}$~}
\theoremstyle{plain}
\newtheorem{theorem}{Theorem}
\newtheorem{lemma}{Lemma}
\theoremstyle{definition}
\newtheorem{definition}{Definition}
\newtheoremstyle{boldremark}  
  {\topsep}                   
  {\topsep}                   
  {\normalfont}               
  {}                          
  {\bfseries}                 
  {.}                         
  { }                         
  {}                          
\theoremstyle{boldremark}
\newtheorem{remark}{Remark}
\DeclareMathOperator*{\argmax}{argmax}
\icmltitlerunning{Differentiable Quadratic Optimization For The Maximum Independent Set Problem}
\begin{document}

\twocolumn[
\icmltitle{Differentiable Quadratic Optimization \\For The Maximum Independent Set Problem}




\begin{icmlauthorlist}
\icmlauthor{Ismail R. Alkhouri}{1,2}
\icmlauthor{Cedric Le Denmat}{3}\\
\icmlauthor{Yingjie Li}{4}
\icmlauthor{Cunxi Yu}{4}
\icmlauthor{Jia Liu}{3}
\icmlauthor{Rongrong Wang}{2}
\icmlauthor{Alvaro Velasquez}{5}
\end{icmlauthorlist}

\icmlaffiliation{1}{University of Michigan, Ann Arbor}
\icmlaffiliation{2}{Michigan State University}
\icmlaffiliation{3}{Ohio State University}
\icmlaffiliation{4}{University of Maryland, College Park}
\icmlaffiliation{5}{University of Colorado, Boulder}

\icmlcorrespondingauthor{Ismail R. Alkhouri}{ismailal@umich.edu; alkhour3@msu.edu}

\icmlkeywords{Machine Learning, ICML}

\vskip 0.3in
]



\printAffiliationsAndNotice{}  

\begin{abstract}
Combinatorial Optimization (CO) addresses many important problems, including the challenging Maximum Independent Set (MIS) problem. Alongside exact and heuristic solvers, differentiable approaches have emerged, often using continuous relaxations of quadratic objectives. Noting that an MIS in a graph is a Maximum Clique (MC) in its complement, we propose a new quadratic formulation for MIS by incorporating an MC term, improving convergence and exploration. We show that every maximal independent set corresponds to a local minimizer, derive conditions with respect to the MIS size, and characterize stationary points. To tackle the non-convexity of the objective, we propose optimizing several initializations in parallel using momentum-based gradient descent, complemented by an efficient MIS checking criterion derived from our theory. We dub our method as \textbf{p}arallelized \textbf{C}lique-Informed \textbf{Q}uadratic \textbf{O}ptimization for MIS (pCQO-MIS). Our experimental results demonstrate the effectiveness of the proposed method compared to exact, heuristic, sampling, and data-centric approaches. Notably, our method avoids the out-of-distribution tuning and reliance on (un)labeled data required by data-centric methods, while achieving superior MIS sizes and competitive run-time relative to their inference time. Additionally, a key advantage of pCQO-MIS is that, unlike exact and heuristic solvers, the run-time scales only with the number of nodes in the graph, not the number of edges. Our code is available at the GitHub repository (\href{https://github.com/ledenmat/pCQO-mis-benchmark}{{{pCQO-MIS}}}). 
\end{abstract}

\section{Introduction}


In his landmark paper~\cite{karp1972reducibility}, Richard Karp established a connection between Combinatorial Optimization Problems (COPs) and the \nph complexity class, implying their inherent computational challenges. Additionally, Richard Karp introduced the concept of reducibility among combinatorial problems that are complete for the complexity class \np. 

Although there exists a direct reduction between some COPs -- such as the case with the Maximum Independent Set (MIS), Maximum Clique (MC), and Minimum Vertex Cover (MVC) --  which allows a solution for one problem to be directly used to solve another, other COPs differ significantly. For example, there exists no straightforward reduction between MIS and the Kidney Exchange Problem (KEP) \cite{mcelfresh2019scalable} (or the Traveling Salesman Problem (TSP) \cite{Dantzig1954TSP}). 

In this paper, we focus on the MIS problem, one of the most fundamental in combinatorial optimization, with many applications including frequency assignment in wireless networks \cite{Matsui2000FrequencyAssignment}, task scheduling \cite{eddy2021maximum}, and genome sequencing \cite{joseph1992dna, Zweig2006OnTM}.

The MIS problem involves finding a subset of vertices in a graph $G = (V, E)$ with maximum cardinality, such that no two vertices in this subset are connected by an edge \cite{tarjan1977finding}. 
In the past few decades, in addition to commercial Integer Programming (IP) solvers (e.g., CPLEX \cite{cplex}, Gurobi \cite{Gurobi}, and most recently CP-SAT \cite{cpsatlp}), powerful heuristic methods (e.g., ReduMIS in \cite{lamm2016finding}) have been introduced to tackle the complexities inherent in the MIS problem. Other solvers can be broadly classified into branch-and-bound-based global optimization methods \cite{akiba2016branch}, and approximation algorithms \cite{boppana1992approximating}.

More recently, differentiable approaches have been explored \cite{bengio2021machine}, falling into two main categories: (\textit{i}) data-driven methods, where a neural network (NN) is trained to fit a distribution over training graphs, and (\textit{ii}) dataless methods \cite{alkhouri2022differentiable, schuetz2022combinatorial}. Both approaches rely on some formulations of the MIS problem, such as the continuous relaxation of the MIS Quadratic Unconstrained Binary Optimization (QUBO) or ReLU-based objective functions. However, data-driven methods often suffer from unsatisfactory {\em generalization} performance when faced with graph instances whose structural characteristics differ from those in the training dataset \cite{bother2022s,gamarnik2023barriers}.

In this paper, we present a new differentiable dataless solver for the MIS problem based on an improved quadratic optimization formulation, a parallel optimization strategy, and momentum-based gradient descent, which we dub as \textbf{p}arallelized \textbf{C}lique-Informed \textbf{Q}uadratic \textbf{O}ptimization for the \textbf{MIS} problem (pCQO-MIS). The contributions of our work are summarized as follows:



\begin{enumerate}[leftmargin=*]
    \item  \textbf{MIS Quadratic Formulation with MC Term}: Leveraging the direct relationship between the MIS and MC problems, we propose a new formulation that incorporates an MC term into the continuous relaxation of the MIS quadratic formulation. 
    
    \item \textbf{Theoretically}:
    \begin{itemize}[leftmargin=*]
        \item We derive a sufficient and necessary condition for the parameter that penalizes the inclusion of adjacent nodes and the MC term parameter with respect to (w.r.t.) the MIS size.

        \item We show that all local minimizers are binary vectors that sit on the boundary of the box constraints, and establish that all these local minimizers correspond to maximal independent sets.

        \item We prove that if non-binary stationary points exist, they are saddle points and not local minimizers, with their existence depending on the graph type and connectivity. 

    \end{itemize}
       \item \textbf{Optimization Strategy}: To improve exploration with our non-convex optimization, we propose the use of GPU parallel processing of several initializations for each graph instance using projected momentum-based gradient descent. 
       
        \item \textbf{Efficient MIS Checking}: Drawing from our theoretical results on local minimizers and stationary points, we develop an efficient MIS checking function that significantly accelerates our implementation.

    \item \textbf{Experimental Validation}: We evaluate our approach on challenging benchmark graph datasets, demonstrating its efficacy. Our method achieves competitive or superior performance compared to state-of-the-art heuristic, exact, and data-driven approaches in terms of MIS size and/or run-time.
\end{enumerate}

\section{Preliminaries}

\paragraph{Notations:} Consider an undirected graph represented as $G=(V,E)$, where $V$ is the vertex set and $E\subseteq V \times V$ is the edge set. 
The number of nodes (resp. edges) is denoted by $|V| = n$ (resp. $|E| = m$), where $|\cdot|$ denotes the cardinality of a set. 
Unless otherwise stated, for a node $v\in V$, we use $\mathcal{N}(v) = \{u\in V \mid (u,v)\in E\}$ to denote the set of its neighbors. 
The degree of a node $v\in V$ is denoted by $\textrm{d}(v) = |\mathcal{N}(v)|$, and the maximum degree of the graph by $\Delta(G)$. 
For a subset of nodes $U\subseteq V$, we use $G[U] = (U, E[U])$ to represent the subgraph induced by the nodes in $U$, where $E[U] = \{(u, v) \in E \mid u, v \in U\}$. 
Given a graph $G$, its complement is denoted by $G'=(V,E')$, where $E' = V \times V \setminus E$ is the set of all the edges between nodes that are not connected in $G$. 
Consequently, if $|E'| = m'$, then $m+m' = n(n-1)/2$ represents the number of edges in the complete graph on $V$. For any $v\in V$, $\mathcal{N}'(v) = \{u\in V \mid (u,v)\in E'\}$ denotes the neighbour set of $v$ in the complement graph $G'=(V,E')$. The adjacency matrix of graph $G$ is denoted by $\mathbf{A}_G\in \{0,1\}^{n \times n}$. We use $\mathbf{I}$ to denote the identity matrix. 
The trace of a matrix $\mathbf{A}$ is denoted by $\mathrm{tr}(\mathbf{A})$. 
For any positive integer $n$, $[n]:=\{1, \ldots, n\}$. The vector (resp. matrix) of all ones and size $n$ (resp. $n\times n$) is denoted by $\mathbf{e}_n$ (resp. $\mathbf{J}_n = \mathbf{e}_n\mathbf{e}^T_n$). 
Furthermore, we use $\mathds{1}(\cdot)$ to denote the indicator function that returns $1$ (resp. $0$) when its argument is True (resp. False).

\paragraph{Problem Statement:}In this paper, we consider the \nph problem of obtaining the maximum independent set (MIS). Next, we formally define MIS and the complementary Maximum Clique (MC) problems.

\begin{definition}[MIS Problem]
Given an undirected graph $G = (V, E)$, the goal of MIS is to find a subset of vertices $\mathcal{I} \subseteq V$ such that $E([\mathcal{I}]) = \emptyset$, and $|\mathcal{I}|$ is maximized.
\end{definition}

\begin{definition}[MC Problem]
Given an undirected graph $G = (V, E)$, the goal of MC is to find a subset of vertices $\mathcal{C} \subseteq V$ such that $G[\mathcal{C}]$ is a complete graph, and $|\mathcal{C}|$ is maximized.
\end{definition}

For the MC problem, the MIS of a graph is an MC of the complement graph \cite{karp1972reducibility}. This means that MIS $\mathcal{I}$ in $G$ is equivalent to MC $\mathcal{C}$ in $G'$. 

Given a graph $G$, if $\mathcal{I}$ is a \textit{Maximal Independent Set (MaxIS)}, then $E([\mathcal{I}]) = \emptyset$, but $|\mathcal{I}|$ is not necessarily the largest in $G$. If $\mathcal{I}$ is an \textit{Independent Set (IS)} that it is not maximal, then $E([\mathcal{I}])$ is an empty set, but there exists at least one $v\notin \mathcal{I}$ such that
\begin{equation}
  E([\mathcal{I} \cup \{v\}]) = \emptyset\:.  
\end{equation}
See Figure~\ref{fig: small ex} for an example. We note that, in this paper, we use MIS and MaxIS interchangeably.


\begin{figure}
    \centering
    \includegraphics[width=1\linewidth]{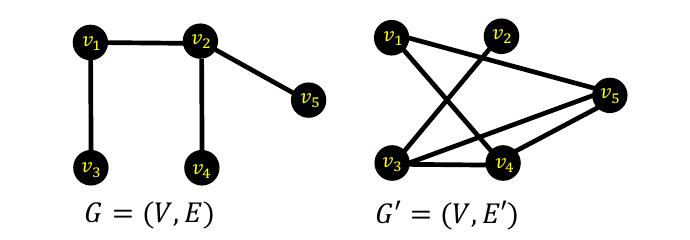}
    \vspace{-.2in}
    \caption{{A graph $G$ (\textit{left}) and its complement graph $G'$ (\textit{right}). Sets $\textrm{MIS}_1=\{v_1, v_4, v_5\}$ and $\textrm{MIS}_2=\{v_3, v_4, v_5\}$ correspond to a maximum independent set in $G$ and an MC in $G'$. Set $\textrm{MaxIS} = \{v_2, v_3\}$ corresponds to a maximal independent set as its not of maximum cardinality. Set $\textrm{IS} = \{v_1, v_4\}$ is not a maximal independent set because $\textrm{IS}\cup \{v_5\}$ is equivalent to} $\textrm{MIS}_1=\{v_1, v_4, v_5\}$.}
    \vspace{-0.2cm}
    \label{fig: small ex}
\end{figure}


Let $\mathbf{z}_v$ be an entry of the binary vector $\mathbf{z}\in \{0,1\}^n$ that corresponds to a node $v\in V$. The integer linear program (ILP) for is \cite{nemhauser1975vertex}:
%
%
\begin{align} \label{eqn: MIS ILP}
\max_{\mathbf{z}\in \{0,1\}^n}  \sum_{v \in V} \mathbf{z}_v~~~~ 
\text{s.t.}~~~~ \mathbf{z}_v + \mathbf{z}_u \leq 1\:, \forall (v,u) \in E. 
\end{align}
Furthermore, the following QUBO in \eqref{eqn: MIS QIP} (with an optimal solution that is equivalent to the optimal solution of the above ILP) can also be used to formulate the MIS problem \cite{pardalos1992branch}: 
%
\begin{equation}\label{eqn: MIS QIP}
\max_{\mathbf{z}\in \{0,1\}^n} \mathbf{e}_n^T \mathbf{z} - \frac{\gamma_\textrm{Q}}{2}\mathbf{z}^T \mathbf{A}_G\mathbf{z}\:,
\end{equation}
where $\gamma_\textrm{Q}>0$ is a parameter that penalizes the selection of two nodes with an edge connecting them. In \cite{mahdavi2013characterization}, it was shown that the condition $\gamma_\textrm{Q} > 1$ is both sufficient and necessary for local minimizers to correspond to binary vectors representing MaxISs.

In Appendix~\ref{sec: append: related}, we review various approaches for solving the MIS problem.



\section{Clique-Informed Differentiable Quadratic MIS Optimization} 
\label{sec: method}

In this section, we first introduce the clique-informed quadratic optimization (CQO) formulation for the MIS problem. Next, we provide theoretical insights into the objective function, and then present our parallelized optimization strategy using momentum-based gradient descent (MGD).

\subsection{Optimization Reformulation}

Our proposed optimization reformulation is
\begin{equation}
\begin{gathered} \label{eqn: MIS CQO main matrix}
\!\!\!\!\min_{\mathbf{x}\in [0,1]^n} f(\mathbf{x}):=-\mathbf{e}_n^T\mathbf{x}+\frac{\gamma}{2} \mathbf{x}^T \mathbf{A}_G \mathbf{x}-
\frac{\gamma'}{2} \mathbf{x}^T \mathbf{A}_{G'} \mathbf{x}\:, 
 \end{gathered}
\end{equation}
where $\gamma>1$, analogous to $\gamma_\textrm{Q}$ in \eqref{eqn: MIS QIP}, serves as the edge penalty parameter. The third term represents the maximum clique (MC) term we propose in this paper, with parameter $\gamma'\geq1$, introduced to discourage sparsity in the solution. The function $f(\mathbf{x})$ can also be expressed as
\begin{equation}
\begin{gathered} \label{eqn: MIS CQO main sumation}
f(\mathbf{x})= -\sum_{v\in V} \mathbf{x}_v + \gamma \sum_{(u,v)\in E} \mathbf{x}_v \mathbf{x}_u - \gamma' \sum_{(u,v)\in E'} \mathbf{x}_v \mathbf{x}_u \:. 
 \end{gathered}\notag
\end{equation}
Utilizing the identity
\begin{equation}
    \mathbf{A}_{G'} = \mathbf{J}_n-\mathbf{I}-\mathbf{A}_{G}\:,
\end{equation}
%
$\mathbf{x}^T\mathbf{x} = \|\mathbf{x}\|^2_2$, \textcolor{black}{and the non-negative entries of $\mathbf{x}$ for which we can write $\|\mathbf{x}\|^2_1 = \mathbf{x}^T \mathbf{J}_n\mathbf{x}$, our proposed function can be rewritten as}
\begin{equation}
\begin{gathered} \label{eqn: MIS CQO main matrix with norms}
    \!\!\!\!\!\!f(\mathbf{x}) = -\mathbf{e}^T_n \mathbf{x} + \frac{\gamma+\gamma'}{2}\mathbf{x}^T \mathbf{A}_G \mathbf{x} + \frac{\gamma'}{2}( \|\mathbf{x}\|^2_2 - \|\mathbf{x}\|^2_1).
 \end{gathered}
\end{equation}
In particular, we incorporate $\gamma$ \textcolor{black}{that penalizes edges in graph $G$ on the optimization objective}. The third term is informed by the \textcolor{black}{duality} between the MIS and MC problems. 

The rationale behind the third term $-
\frac{\gamma'}{2} \mathbf{x}^T \mathbf{A}_{G'} \mathbf{x}$ in \eqref{eqn: MIS CQO main matrix} (corresponding to the edges of the complement graph $G'$) is to (\textit{i}) encourage the optimizer to select two nodes with no edge connecting them in $G$ (implying an edge in $G'$), and (\textit{ii}) discourage sparsity given the last term of \eqref{eqn: MIS CQO main matrix with norms}.




Let $\mathbf{z}^*$ be a binary minimizer of \eqref{eqn: MIS CQO main matrix} with
\begin{equation}
    \mathcal{I}(\mathbf{z}^*) = \{ v\in V : \mathbf{z}_v^* = 1\}
\end{equation}
%
Then, we have:
\begin{equation}
f(\mathbf{z}^*)= -\sum_{v\in V} \mathds{1}(\mathbf{z}^*_v=1) -\gamma' |E'([\mathcal{I}(\mathbf{z}^*)])|\:.
\end{equation}
%


This expression includes only the first and third terms, as there are no edges connecting any two nodes in $\mathcal{I}(\mathbf{z}^*)$.

\begin{remark}\label{rem: comp cost}
    Given that the number of non-zero entries in $\mathbf{A}_G$ is $2m$ (with one entry for each edge in $G$ and $\mathbf{A}_G$ being symmetric), the computational cost of a continuous relaxation of the QUBO formulation in \eqref{eqn: MIS QIP} (with box constraints) is $\mathcal{O}(mn)$. Because the vector-matrix multiplication in \eqref{eqn: MIS CQO main matrix with norms} is only in the second term, the computational cost of our proposed function is also $\mathcal{O}(mn)$. This means that including the MC term in our proposed objective results in the same computational cost as \eqref{eqn: MIS QIP} with $[0,1]^n$.
\end{remark}
%


\subsection{Theoretical Insights}


In this subsection, we provide theoretical insights where we first examine the constant Hessian of $f(\mathbf{x})$ in \eqref{eqn: MIS CQO main matrix}. Then, we provide the necessary and sufficient condition for $\gamma$ and $\gamma'$ for any MaxIS to correspond to local minimizers of \eqref{eqn: MIS CQO main matrix}. Moreover, we also provide a sufficient condition for all local minimizers of \eqref{eqn: MIS CQO main matrix} to be associated with a MaxIS. Additionally, we show that if non-binary stationary points exist, they are saddle points. We relegate the detailed proofs to Appendix~\ref{sec: appen proofs}.

\begin{definition}[MaxIS vector] Given a graph $G=(V,E)$, a binary vector $\mathbf{x}\in \{0,1\}^{n}$ is called a MaxIS vector if there exists a MaxIS $\mathcal{I}$ of $G$ such that $\mathbf{x}_i=1$ for all $i\in \mathcal{I}$, and $\mathbf{x}_i=0$ for all $i\notin \mathcal{I}$.
\end{definition}

\begin{lemma}\label{th: lemma hessian is not PSD}
    For any non-complete graph $G$, the constant hessian of $f(\mathbf{x})$ in \eqref{eqn: MIS CQO main matrix}, i.e., $\gamma \mathbf{A}_{G} - \gamma' \mathbf{A}_{G'}$, is always a non-positive-semidefinite (non-PSD) matrix. 
\end{lemma}
\textit{Proof Sketch:} Here, we show that the Hessian is a non-PSD matrix by showing that for any MaxIS vector $\mathbf{x}$, the condition $\mathbf{x}^T (\gamma \mathbf{A}_G - \gamma' \mathbf{A}_{G'})\mathbf{x} \geq 0$ can not be satisfied. 

The result in Lemma~\ref{th: lemma hessian is not PSD} indicates that our quadratic optimization problem is always non-convex for any non-complete graph. \textcolor{black}{The work in \cite{burer2009nonconvex} discusses the complexity of box-constrained continuous non-convex quadratic optimization problems.}

\begin{theorem}[Necessary and Sufficient Condition on $\gamma$ and $\gamma'$ for MaxIS vectors to be local minimizers of \eqref{eqn: MIS CQO main matrix}] \label{th: quadratic objective}
    Given an arbitrary graph $G=(V,E)$ and its corresponding formulation in \eqref{eqn: MIS CQO main matrix}, suppose the size of any MIS of $G$ is $k$. 
    Then, $\gamma \geq 1+\gamma' k$ is necessary and sufficient for all MaxIS vectors to be local minimizers of \eqref{eqn: MIS CQO main matrix} for arbitrary graphs.
\end{theorem}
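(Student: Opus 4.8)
The plan is to treat $f$ in \eqref{eqn: MIS QINO_un_normalized} as a quadratic on the box $[0,1]^n$ and to test local minimality of an MIS vector $\mathbf{x}^\star$ directly through the first- and second-order conditions on the cone of feasible directions at the corresponding box corner. Writing $S$ for a maximum independent set ($|S|=k$) and $\mathbf{x}^\star$ for its indicator, I first record that $\nabla f(\mathbf{x}) = -\mathbf{e}_n + \gamma\mathbf{A}_G\mathbf{x} - \mathbf{A}_{G'}\mathbf{x}$ and that the Hessian $\mathbf{H} = \gamma\mathbf{A}_G - \mathbf{A}_{G'}$ is constant. Evaluating the gradient at $\mathbf{x}^\star$ and setting $a_i := |\mathcal{N}(i)\cap S|$, I get $\nabla f(\mathbf{x}^\star)_i = -k$ for $i\in S$ (each such $i$ has no $G$-neighbor in $S$ and exactly $k-1$ neighbors in $S$ within $G'$), while for $i\notin S$ every vertex of $S$ is joined to $i$ in exactly one of $G,G'$, so its $G'$-degree into $S$ is $k-a_i$ and $\nabla f(\mathbf{x}^\star)_i = -1 + \gamma a_i - (k-a_i) = (\gamma+1)a_i - (k+1)$. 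The feasible directions $\mathbf{d}$ at $\mathbf{x}^\star$ satisfy $\mathbf{d}_i\le 0$ for $i\in S$ and $\mathbf{d}_i\ge 0$ for $i\notin S$.

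For sufficiency, I would show that $\gamma\ge k+1$ makes the first-order term strictly positive along every nonzero feasible direction, which already forces a strict local minimum without any Hessian computation. Indeed, maximality of $S$ gives $a_i\ge 1$ for all $i\notin S$, so $\nabla f(\mathbf{x}^\star)_i \ge (\gamma+1)-(k+1)=\gamma-k\ge 1>0$; together with $\nabla f(\mathbf{x}^\star)_i\,\mathbf{d}_i = -k\,\mathbf{d}_i\ge 0$ for $i\in S$ (because $\mathbf{d}_i\le0$) and $\nabla f(\mathbf{x}^\star)_i\,\mathbf{d}_i\ge 0$ for $i\notin S$, this yields $\nabla f(\mathbf{x}^\star)^\top\mathbf{d}>0$ for all feasible $\mathbf{d}\neq 0$. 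A compactness argument on the normalized feasible cone then makes $f(\mathbf{x}^\star+t\mathbf{d})>f(\mathbf{x}^\star)$ for all small $t>0$, so $\mathbf{x}^\star$ is a local minimizer.

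The harder and more delicate half is necessity, and I expect it to be the main obstacle, because the purely first-order (KKT) requirement $\nabla f(\mathbf{x}^\star)_i\ge 0$ only yields $(\gamma+1)a_i\ge k+1$, whose worst case $a_i=1$ gives the weaker bound $\gamma\ge k$. The extra margin up to $k+1$ must come from the boundary case: at $\gamma=k$ and $a_i=1$ the component $\nabla f(\mathbf{x}^\star)_i=\gamma-k$ vanishes, so the gradient no longer certifies minimality and one must examine $\mathbf{H}=\gamma\mathbf{A}_G-\mathbf{A}_{G'}$ on the subcone of zero-gradient (``critical'') directions. I would exhibit a worst-case graph containing two out-of-set vertices $p,q$, each with a single $G$-neighbor in $S$ and non-adjacent to each other in $G$, and verify that $S$ remains a \emph{maximum} (not merely maximal) independent set of size $k$; then the direction $\mathbf{d}$ increasing $p$ and $q$ is feasible with $\nabla f(\mathbf{x}^\star)^\top\mathbf{d}=0$ and $\mathbf{d}^\top\mathbf{H}\mathbf{d}=2\bigl(\gamma(\mathbf{A}_G)_{pq}-(\mathbf{A}_{G'})_{pq}\bigr)=-2<0$, a genuine descent direction giving $f(\mathbf{x}^\star+t\mathbf{d})-f(\mathbf{x}^\star)=-t^2$. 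This shows the MIS vector fails to be a local minimizer once $\gamma$ drops to the boundary, so guaranteeing local minimality across arbitrary graphs forces $\gamma$ strictly past $k$, i.e. the stated threshold $\gamma\ge k+1$. The two technical points I would have to nail down carefully are the second-order certification on this degenerate cone and the verification that the tightness construction does not accidentally enlarge the maximum independent set.
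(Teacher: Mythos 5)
Your computation of the gradient at an MIS indicator is correct and, notably, more careful than the paper's own: since every vertex $i\notin S$ sees each of the $k$ vertices of $S$ either as a $G$-neighbour or as a $G'$-neighbour, you correctly substitute $\ell_i=k-a_i$ and obtain $\nabla f(\mathbf{x}^\star)_i=(\gamma+1)a_i-(k+1)$, whereas the paper maximizes $(1+\ell_v)/m_v$ by taking $m_v=1$ and $\ell_v=k$ \emph{independently} --- a configuration no graph can realize, because $m_v+\ell_v=k$. Your sufficiency half (strict positivity of the directional derivative on the feasible cone at the box corner, plus compactness of the normalized cone) is sound, and your second-order counterexample at $\gamma=k$ (two out-of-set vertices $p,q$, each with a single, distinct neighbour in $S$ so that $S$ stays maximum, and $(p,q)\notin E$ giving $\mathbf{d}^\top\mathbf{H}\mathbf{d}=-2$) genuinely exhibits a descent direction that the paper's purely first-order argument never addresses.

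The gap is in the final step of your necessity argument. What you have actually proved is: (i) $\gamma\ge k$ is necessary (first order, worst case $a_i=1$); (ii) $\gamma=k$ fails for some graphs (your second-order construction); hence $\gamma>k$ is necessary. But ``$\gamma>k$'' is not ``$\gamma\ge k+1$'' for a real-valued parameter, and your own sufficiency bound $\nabla f(\mathbf{x}^\star)_i\ge(\gamma+1)a_i-(k+1)\ge\gamma-k$ shows that \emph{every} $\gamma>k$ --- for instance $\gamma=k+\tfrac{1}{2}$ --- already makes all MIS vectors strict local minimizers of Problem~\eqref{eqn: MIS QINO_un_normalized} for every graph. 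Consequently no argument can establish that $\gamma\ge k+1$ is necessary; under your (correct) bookkeeping the tight universal threshold is $\gamma>k$, and the constant $k+1$ in the paper arises precisely from decoupling $m_v$ and $\ell_v$. The sentence ``forces $\gamma$ strictly past $k$, i.e.\ the stated threshold $\gamma\ge k+1$'' is therefore a non sequitur. To repair the write-up you should either prove the statement with threshold $\gamma>k$ (your argument essentially does this, modulo the $k=1$ edge case where the graph is complete and no two out-of-set vertices are non-adjacent) or state explicitly that $\gamma\ge k+1$ is sufficient while necessity holds only in the weaker form $\gamma>k$.
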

\textit{Proof Sketch:} Given a MaxIS $\mathcal{I}$ with $|\mathcal{I}|=k$, we derive the bound by considering the boundary points enforced by the box-constraints, and the gradient of $f(\mathbf{x})$ w.r.t. some $v\in V\setminus \mathcal{I}$.

\begin{remark}{
Theorem~\ref{th: quadratic objective} offers guidance on selecting $\gamma$ and $\gamma'$. While the MIS set size $k$ is typically unknown in advance, it’s possible to use classical estimates of $k$ to inform the choice of these parameters. For example, as shown in \cite{wei1981lower}, $k$ can be bounded by
\begin{equation}
    k \geq \sum_{v\in V} \frac{1}{1+\textrm{d}(v)}\:,
\end{equation}
which \textcolor{black}{could} provide a useful estimate for this purpose.}
\end{remark}

Next, we provide further characterizations of the local minimizers of \eqref{eqn: MIS CQO main matrix}. 

\begin{lemma}\label{th: all local mins are binary}
    All local minimizers of \eqref{eqn: MIS CQO main matrix} are binary vectors.
\end{lemma}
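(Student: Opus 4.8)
The plan is to exploit the fact that $f$ is \emph{multilinear}: since both $\mathbf{A}_G$ and $\mathbf{A}_{G'}$ have zero diagonal, the quadratic form contains no $\mathbf{x}_v^2$ term, so $f$ is affine in each coordinate when the others are held fixed. This is already visible in \eqref{eqn: th ins grad per node}, where $\partial f(\mathbf{x})/\partial \mathbf{x}_v$ does not involve $\mathbf{x}_v$ itself. I would argue by contradiction: assume a local minimizer $\mathbf{x}^*$ has a nonempty set of strictly fractional coordinates $T := \{v : 0 < \mathbf{x}^*_v < 1\}$, and derive a contradiction under the standing hypothesis $\gamma \geq n$. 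The first step is to record the first-order necessary condition: each $v\in T$ sits in the interior of $[0,1]$ along its own coordinate, so local minimality forces $\partial f(\mathbf{x}^*)/\partial \mathbf{x}_v = 0$ for every $v\in T$.

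Next I would split on $|T|$. For $|T|\geq 2$, consider the principal submatrix $\mathbf{H}_T$ of the Hessian $\gamma\mathbf{A}_G - \mathbf{A}_{G'}$ indexed by $T$. Its diagonal vanishes, while every off-diagonal entry is nonzero, because any two distinct nodes form an edge of exactly one of $G$ or $G'$ and thus contribute either $\gamma$ or $-1$. A nonzero symmetric matrix with zero diagonal has trace $0$ but is not the zero matrix, hence it cannot be positive semidefinite and must have a negative eigenvalue. Letting $\mathbf{d}$ be the corresponding eigenvector extended by zeros off $T$, and using that $f$ is exactly quadratic with gradient vanishing along $T$, I obtain $f(\mathbf{x}^* + t\mathbf{d}) = f(\mathbf{x}^*) + \tfrac{t^2}{2}\,\mathbf{d}^\top \mathbf{H}_T \mathbf{d} < f(\mathbf{x}^*)$ for all small $t\neq 0$; since the perturbed coordinates are interior, both signs of $t$ stay feasible, contradicting local minimality.

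For the remaining case $|T|=1$, let $v$ be the lone fractional coordinate; all other entries are binary, so with $m := \sum_{u\in\mathcal{N}(v)}\mathbf{x}^*_u$ and $\ell := \sum_{u\in\mathcal{N}'(v)}\mathbf{x}^*_u$ nonnegative integers obeying $m\leq \mathrm{d}(v)$ and $\ell \leq n-1-\mathrm{d}(v)$, stationarity reads $-1 + \gamma m - \ell = 0$. If $m=0$ the derivative equals $-1-\ell<0$; if $m\geq 1$ then $\mathrm{d}(v)\geq 1$ and, invoking $\gamma\geq n$, the derivative is at least $-1 + \gamma - \ell \geq -1 + n - (n-1-\mathrm{d}(v)) = \mathrm{d}(v) \geq 1 > 0$. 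In either subcase the derivative is nonzero, contradicting the first-order condition. Hence $T=\emptyset$ and $\mathbf{x}^*$ is binary.

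The main obstacle is the $|T|\geq 2$ case, and its crux is the linear-algebra observation that a symmetric zero-diagonal matrix is positive semidefinite only if it is identically zero; the complement structure of $G$ and $G'$ is exactly what guarantees every off-diagonal entry of $\mathbf{H}_T$ is nonzero, ruling out $\mathbf{H}_T=\mathbf{0}$ whenever $|T|\geq 2$. Everything else is first-order optimality bookkeeping, with the inexpensive but essential use of $\gamma\geq n$ needed only to eliminate the single-fractional-coordinate case.
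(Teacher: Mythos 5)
Your proof is correct, and while it shares the paper's central mechanism --- the Hessian of $f$ restricted to the fractional coordinates has zero diagonal, hence zero trace, hence cannot be positive semidefinite unless it vanishes --- your write-up is more complete than the paper's in two ways that are worth noting. First, the paper justifies the nonvanishing of the restricted Hessian by asserting $\gamma\mathbf{A}_G-\mathbf{A}_{G'}\neq 0$, which is a statement about the \emph{full} matrix; what is actually needed is that the principal submatrix indexed by $J$ is nonzero, and your observation that any two distinct nodes are joined in exactly one of $G$ and $G'$ (so every off-diagonal entry of $\mathbf{H}_T$ is $\gamma$ or $-1$) is the correct justification, valid precisely when $|J|\geq 2$. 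Second, and more substantively, the paper's trace argument silently fails when $|J|=1$: there the restricted Hessian is the $1\times 1$ zero matrix, which \emph{is} positive semidefinite, so no contradiction is obtained. Your separate treatment of the single-fractional-coordinate case via the first-order condition $-1+\gamma m-\ell=0$ closes this gap, and it is also the only place where the hypothesis $\gamma\geq n$ (inherited from Theorem~\ref{th: all fixed are binary and MIS}, in whose proof this lemma is invoked) is genuinely used --- indeed, without some restriction on $\gamma$ the integer equation $\gamma m=1+\ell$ can be solvable and the $|J|=1$ case cannot be dismissed, so the lemma as baldly stated for arbitrary $\gamma>1$ is delicate. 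You also go one step beyond necessary conditions in the $|T|\geq 2$ case by exhibiting an explicit feasible descent direction along a negative eigenvector, which turns the paper's ``necessary second-order condition fails'' argument into a direct contradiction of local minimality; both are logically adequate, but yours is self-contained.
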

\textit{Proof Sketch:} We prove this by showing that for any coordinates of $\mathbf{x}$ with non-binary values, one necessary condition for any local minimizer can not be satisfied. 

Building on the result of Lemma~\ref{th: all local mins are binary}, we provide a stronger condition on $\gamma$ and $\gamma'$ that ensures all local minimizers of \eqref{eqn: MIS CQO main sumation} correspond to a MaxIS.

\begin{theorem}[Local Minimizers of \eqref{eqn: MIS CQO main matrix}]\label{th: all fixed are binary and MIS}
Given graph $G=(V,E)$ and set $\gamma> 1+\gamma' \Delta(G')$, all local minimizers of \eqref{eqn: MIS CQO main matrix} are MaxIS vectors of $G$. 
\end{theorem}
\textit{Proof Sketch:} By Lemma~\ref{th: all local mins are binary}, we examine the local minimizers that are binary. With this, we prove that all local minimizers are ISs. Then, we show that any IS, that is not maximal, is a not a local minimizer.

\begin{remark} 
    The assumption $\gamma > 1 + \gamma' \Delta(G')$ in Theorem~\ref{th: all fixed are binary and MIS} is stronger than that in Theorem~\ref{th: quadratic objective}. The trade-off of selecting a larger $\gamma$ value is that, while it ensures that only MaxISs are local minimizers, it also increases the non-convexity of the optimization problem, making it more challenging to solve.
\end{remark}

\begin{remark}  
Although the proposed box-constrained quadratic Problem \eqref{eqn: MIS CQO main matrix} is still \textit{\nph} to solve for the global minimizer(s), it is a relaxation of the original integer programming problem. It can leverage gradient information, allowing the use of high-performance computational resources and parallel processing to enhance the efficiency and scalability of our approach.
\end{remark}

In the following theorem, we provide results regarding points where the gradient of $f(\mathbf{x})$ is zero. 


\begin{theorem}[Non-Extremal Stationary Points]\label{th: one uniq point} For any graph $G$, assume that there exists a point $\mathbf{x}'$ such that $\nabla_{\mathbf{x}}f(\mathbf{x}') = \mathbf{0}$, i.e.,
\begin{equation}
    \mathbf{x}' := (\gamma \mathbf{A}_G - \gamma' \mathbf{A}_{G'})^{-1} \mathbf{e}_n
\end{equation}
Then, $\mathbf{x}'$ is not a local minimizer of \eqref{eqn: MIS CQO main matrix} and therefore does not correspond to a MaxIS.    
\end{theorem}

\textit{Proof Sketch:} We show that $\mathbf{x}'$ is not a local minimizer by showing that it can not be binary, building upon the result on Lemma~\ref{th: all local mins are binary}.

\begin{remark}\label{rem: saddle point} The above theorem implies that while there may exist a non-binary stationary point $\mathbf{x}'$, it is a \textit{saddle point}, not a local minimizer, as indicated by the zero gradient vector and by Lemma~\ref{th: lemma hessian is not PSD} \textcolor{black}{(the Hessian is always non-PSD)}. Momentum-based Gradient Descent (MGD) is typically effective at escaping saddle points and converging to local minimizers, which \textcolor{black}{serves as one motivation of} its use in pCQO-MIS. Furthermore, we observe that this specific saddle point is never encountered in our empirical evaluations and that it depends on the structure of the graph. In many instances, it lies outside the box constraints, depending on the graph's density and connectivity. \textcolor{black}{Further discussion about the existence of saddle points is provided in Appendix~\ref{sec: appen unique saddle}.}
\end{remark}

\subsection{Optimization Strategy}

Given the highly non-convex nature of our optimization problem, this section introduces the pCQO-MIS method for efficiently obtaining MaxISs. We first describe the projected MGD and parallel initializations used. Then, we present the efficient MaxIS checking criterion, followed by a detailed outline of the algorithm. 


\subsubsection{Projected Momentum-based Gradient Descent}

As previously discussed, our function in \eqref{eqn: MIS CQO main matrix} is highly non-convex which makes finding the global minimizer(s) a challenging task. However, first-order gradient-based optimizers are effective for finding a local minimizer given an initialization in $[0,1]^n$. 

Given the full differentiability of the objective in \eqref{eqn: MIS CQO main matrix}, with the gradient vector defined as
\begin{equation}\label{eqn: grad}
    \mathbf{g}(\mathbf{x}) := \nabla_{\mathbf{x}}f(\mathbf{x}) = -\mathbf{e}_n + (\gamma \mathbf{A}_G - \gamma' \mathbf{A}_{G'})\mathbf{x}\:,
\end{equation}
MGD empirically proves to be computationally efficient. Specifically, let $\mathbf{v} \in \mathbb{R}^n$, $\beta\in (0,1)$, and $\alpha>0$ represent the velocity vector, momentum parameter, and optimization step size for MGD, respectively. 

The projected MGD \cite{polyak1964some} updates are then defined as follows:
\begin{subequations}\label{eqn: MGD update}
\begin{gather}
    \mathbf{v} \leftarrow \beta \mathbf{v} + \alpha \mathbf{g}(\mathbf{x})\:,~~~~\\
~~~~~\mathbf{x} \leftarrow \mathrm{Proj}_{[0,1]^n}(\mathbf{x}-\mathbf{v})\:.
\end{gather}
\end{subequations}
%
%
We implement the updates in \eqref{eqn: MGD update} based on our empirical observation that fixed-step-size gradient descent for \eqref{eqn: MIS CQO main matrix} is sensitive to the choice of step size and frequently fails to converge to local minimizers due to overshooting. \textcolor{black}{This serves as another motivation of why we adopt Momentum-based Gradient Descent (MGD), as further supported in Appendix~\ref{sec: appen impact of MGD vs. GD}. } 



\subsubsection{Degree-based Parallel Initializations}

For a single graph, we propose to use various points in $[0,1]^n$ and execute the updates in \eqref{eqn: MGD update} in parallel for each. Given a specified number of parallel processes $M$, we define $S_\textrm{ini}$ to denote the set of multiple initializations, where $|S_\textrm{ini}|=M$. 

Based on the intuition that vertices with higher degrees are less likely to belong to an MIS compared to those with lower degrees \cite{alkhouri2022differentiable}, we initialize $S_\textrm{ini}$ with $M$ samples drawn from a Gaussian distribution $\mathcal{N}(\mathbf{m},\eta\mathbf{I})$. Here, $\mathbf{m}$ is the mean vector, initially set to $\mathbf{h}$, where $\mathbf{h}$ is:
\begin{align}\label{eqn: deg based ini}
 ~~~~~~~\mathbf{h}_v = 1 - \frac{\mathrm{d}(v)}{\Delta(G)}, \forall v\in V\:, \\ \nonumber
 \mathbf{h}\leftarrow \frac{\mathbf{h}}{\max_v \mathbf{h}_v}\:.~~~~~~~~~~~
\end{align}
$\eta$ is a hyper-parameter that regulates the exploration around $\mathbf{m}$. Once the optimization for each initialization is complete, we proceed with the MaxIS checking procedure for all the results, which we discuss next.


\subsubsection{Efficient Implementation of Maximal IS Checking}
\label{sec: method MIS checking}

Given a binary vector $\mathbf{z}\in \{0,1\}^n$ with 
\begin{equation}\label{eqn: mapping}
    \mathcal{I}(\mathbf{z}) := \{v\in V : \mathbf{z}_v=1\}\:,
\end{equation}
the standard approach to check whether it is an IS and then whether it is a MaxIS involves iterating over all nodes to examine their neighbors. Specifically, this entails verifying that (\textit{i}) no two nodes $(v,u)\in E$ with $\mathbf{z}_v = \mathbf{z}_u = 1$ exist (IS checking), and (\textit{ii}) there does not exist any $u\notin \mathcal{I}(\mathbf{z})$ such that $\forall w \in \mathcal{I}(\mathbf{z})$, $u \notin \mathcal{N}(w)$ (MaxIS checking). However, as the order and density of the graph increase, the computational time required for this process may become significantly longer. 

Matrix-vector multiplication can be used for IS checking, as the condition $\mathds{1}(\mathbf{z}^T \mathbf{A}_G \mathbf{z} = 0)$ indicates the presence of edges in the graph. If $\mathbf{z}^T \mathbf{A}_G \mathbf{z} > 0$, then $\mathbf{z}$ can be immediately identified as not being an IS. While this approach efficiently checks for IS validity, it cannot determine whether the IS is maximal. 


Building on the characteristics of local minimizers and the non-extremal stationary points of \eqref{eqn: MIS CQO main matrix}, discussed in Lemma~\ref{th: all local mins are binary}, Theorem~\ref{th: all fixed are binary and MIS}, and Theorem~\ref{th: one uniq point}, we propose an efficient implementation for checking whether a vector $\mathbf{x}\in [0,1]^n$ corresponds to a MaxIS. 

Specifically, Lemma~\ref{th: all local mins are binary}, demonstrates that all local minimizers are binary. Subsequently, in Theorem~\ref{th: all fixed are binary and MIS}, we establish that all local minimizers correspond to MaxISs. This implies that all binary stationary points resulting from the updates in \eqref{eqn: MGD update} within our box-constrained optimization in \eqref{eqn: MIS CQO main matrix} are local minimizers situated at the boundary of $[0,1]^n$ and correspond to MaxISs, as further elaborated in the proof of Theorem~\ref{th: all fixed are binary and MIS}. Consequently, we propose a new MaxIS checking condition that relies on a single matrix-vector multiplication. For a given $\mathbf{x} \in [0,1]^n$, we first obtain its binary representation as a vector $\mathbf{z}$, where $\mathbf{z}_v = \mathds{1}(\mathbf{x}_v>0)$ for all $v\in V$. We then verify whether the following condition is satisfied.
\begin{equation}\label{eqn: MIS checking}
    \mathds{1}\Big( \mathbf{z} = \mathrm{Proj}_{[0,1]^n}\big( \mathbf{z} - \alpha \mathbf{g}(\mathbf{z}) \big) \Big) \:.
\end{equation}
Equation \eqref{eqn: MIS checking} represents a simple projected gradient descent step to determine whether $\mathbf{z}$ is at the boundary of the box-constraints. If \eqref{eqn: MIS checking} holds true, then the MaxIS is given by $\mathcal{I}(\mathbf{z})$, as defined in \eqref{eqn: mapping}.


%
\begin{remark}\label{rem: MIS checking for QUBO}
    As previously discussed, the work in \cite{mahdavi2013characterization} showed that any binary minimizer of a box-constrained continuous relaxation of \eqref{eqn: MIS QIP} corresponds to a MaxIS when $\gamma_\textrm{Q}>1$. This means that verifying whether a binary vector corresponds to a MaxIS using the proposed projected gradient descent step can also be applied using \eqref{eqn: MIS QIP} as:
\begin{equation}\label{eqn: MIS checking with QUBO}
        \mathds{1}\Big(\mathbf{z} = \mathrm{Proj}_{[0,1]^n}\big( \mathbf{z} + \alpha( \mathbf{e}_n - \gamma_{\textrm{Q}} \mathbf{A}_G\mathbf{z})\big)\Big) \:.
\end{equation}
\end{remark}
In Section~\ref{sec: appen MIS checking speed ups}, we show the speedups obtained from using this approach as compared to the standard iterative approach discussed earlier in this subsection.

\begin{algorithm}[t]
\caption{\textbf{pCQO-MIS}.}
\textbf{Input}: Graph $G=(V,E)$, set of initializations $S_\textrm{ini}$, number of iterations $T$ per one initialization, edge-penalty parameter $\gamma$, MC term parameter $\gamma'$, and MGD parameters: Step size $\alpha$, and momentum parameter $\beta$. \\
\vspace{1.5mm}
\textbf{Output}: The best obtained MaxIS $\mathcal{I}^*$ in $G$\\
\vspace{1mm}
\small{01:} \textbf{Initialize} $S_{\textrm{MaxIS}} =\{\cdot\}$ (Empty set to collect MaxISs)\\
\vspace{1mm}
\small{02:} \textbf{For} $\mathbf{x}[0] \in S_\textrm{ini}$ (\textbf{Parallel Execution})\\
\vspace{1mm}
\small{03:} \hspace{4mm}\textbf{Initialize} $\mathbf{v}[0] \leftarrow \mathbf{0}$\\
\vspace{1mm}
\small{04:} \hspace{4mm}\textbf{For} $t\in [T]$ \\
\vspace{1mm}
\small{05:} \hspace{8mm}\textbf{Obtain} $\mathbf{g}(\mathbf{x}[t-1]) = -\mathbf{e}_n + (\gamma \mathbf{A}_G - \gamma' \mathbf{A}_{G'})\mathbf{x}[t-1]$\\
\vspace{1mm}
\small{06:} \hspace{8mm}\textbf{Obtain} $\mathbf{v}[t] = \beta\mathbf{v}[t-1] + \alpha \mathbf{g}(\mathbf{x}[t-1])$\\
\vspace{1mm}
\small{07:} \hspace{8mm}\textbf{Obtain} $\mathbf{x}[t] = \mathrm{Proj}_{[0,1]^n}(\mathbf{x}[t-1] - \mathbf{v}[t])$\\
\vspace{1mm}
\small{08:} \hspace{4mm}\textbf{Obtain} $\mathbf{z}[T]$ with $ \mathbf{z}_v[T] =  \mathds{1}(\mathbf{x}_v[T]>0), \forall v\in V$\\
\vspace{1mm}
\small{09:} \hspace{4mm}\textbf{If} $ \mathds{1}\big( \mathbf{z}[T] = \mathrm{Proj}_{[0,1]^n}\big( \mathbf{z}[T] - \alpha \mathbf{g}(\mathbf{z}[T])\big)\big)$\\
\vspace{1mm}
\small{10:} \hspace{8mm}  \textbf{Then} $S_{\textrm{MaxIS}}\leftarrow S_{\textrm{MaxIS}} \cup \mathcal{I}(\mathbf{z}[T])$\\
\vspace{1mm}
\small{11:} \textbf{Return} $\mathcal{I}^* = \argmax_{\mathcal{I}\in S_Q}|\mathcal{I}|$   \\
\vspace{1mm}
\vspace{-3.5mm}
\label{alg: QINO MIS alg}
\end{algorithm}

\subsubsection{The pCQO-MIS Algorithm}

We outline the proposed procedure in Algorithm~\ref{alg: QINO MIS alg}. As shown, the algorithm takes a graph $G$, the set of initializations $S_\textrm{ini}$, the maximum number of iterations per batch $T$ (with iteration index $t$), the edge penalty parameter $\gamma$, the MC term parameter $\gamma'$, step size $\alpha$, and momentum parameter $\beta$ as inputs.

For each initialization vector in set $S_\textrm{ini}$ and iteration $t\in [T]$, Lines 5 to 7 involve updating the optimization variable $\mathbf{x}[t]$. After $T$ iterations, in Lines 8 to 10, the algorithm checks whether the binary representation of $\mathbf{x}[T]$ corresponds to a MaxIS using \eqref{eqn: MIS checking}. Finally, the best-found MaxIS, determined by its cardinality, is returned in Line 10.

After $M>1$ optimizations are complete (i.e., when the batch is complete), another set of initializations are placed in $S_\textrm{ini}$. Then Algorithm~\ref{alg: QINO MIS alg} is executed again, depending on the time budget and the availability of the computational resources (number of batches). When Algorithm~\ref{alg: QINO MIS alg} is executed again, the vector $\mathbf{v}$ is not re-initialized, but rather maintained from the previous batch. Subsequent runs depend on sampling from $\mathcal{N}(\mathbf{m},\eta\mathbf{I})$ where $\mathbf{m}$ is set to the \textcolor{black}{binarized} vector of the best obtained MaxIS from the previous run. 

\begin{remark}\label{rem: pCQO-MIS as local search}
    \textcolor{black}{Optimizing initialized points around a binary vector that corresponds to a MaxIS shows that pCQO-MIS can be used as a local search heuristic for MIS.}
\end{remark}

\begin{remark}\label{rem: hyper-parameters}
    \textcolor{black}{While Theorem~\ref{th: all fixed are binary and MIS} indicates how to select $\gamma$ and $\gamma'$, other hyper-parameters (i.e., $\alpha, \beta$, and $T$) still need to be tuned to obtain feasible solutions. In Appendix~\ref{sec: appen hyper parameter tuning procedure}, we describe a basic grid search procedure to select these parameters.}
\end{remark}





\section{Experimental results}
\label{sec: exp}

\subsection{Settings, Baselines, \& Benchmarks}

We code our objective function and the proposed algorithm using C++. For baselines, we utilize Gurobi \cite{Gurobi} and the recent Google solver CP-SAT \cite{cpsatlp} for the ILP in \eqref{eqn: MIS ILP}, ReduMIS \cite{lamm2016finding}, iSCO\footnote{\tiny{\url{https://github.com/google-research/discs}}} \cite{sun2023revisiting}, and four learning-based methods: DIMES \cite{qiu2022dimes}, DIFUSCO \cite{sun2023difusco}, LwD \cite{ahn2020learning}, and the GCN method in \cite{li2018combinatorial} (commonly referred to as `Intel'). We note that, following the analysis in \cite{bother2022s}, GCN's code cloning to ReduMIS is disabled, which was also done in \cite{qiu2022dimes,sun2023difusco}. To show the impact of the MC term, we include the results of pCQO-MIS without the third term (i.e., $\gamma'=0$) which we term pQO-MIS (\textcolor{black}{see also Appendix~\ref{sec: appen impact of clique}}). 

Aligned with recent methods (DIMES, DIFUSCO, and iSCO), we employ the Erdos-Renyi (ER) \cite{erdos1960evolution} graphs from \cite{qiu2022dimes} and the SATLIB graphs from \cite{hoos2000satlib} as benchmarks. The ER dataset\footnote{\tiny{\url{https://github.com/DIMESTeam/DIMES}}} consists of $128$ graphs with $700$ to $800$ nodes and $p=0.15$, where $p$ is the probability of edge creation. The SATLIB dataset consists of $500$ graphs (with at most $1,347$ nodes and $5,978$ edges).  Additionally, the GNM random graph generator function of NetworkX \cite{SciPyProceedings_11} is utilized for our scalability experiment in Section~\ref{sec: exp scal}. \textcolor{black}{Results for the DIMACS graphs \cite{johnson1996cliques}, larger ER graphs, and the BA graphs from \cite{wu2025unrealized} are given in Appendix~\ref{sec: appen Results for DIMACS and BHOSLIB}, Appendix~\ref{sec: appen results from Cunxi's team}, and Appendix~\ref{sec: appen results BA}, respectively}.  

For pCQO-MIS, the hyper-parameters are set as given in Table~\ref{tbl: hyper-parameters of pCQI-MIS} of Appendix~\ref{sec: appen impl det}. 
Further implementation details and results are provided in Appendix~\ref{sec: appen additional results}. Our code is available online\footnote{\tiny{\url{https://github.com/ledenmat/pCQO-mis-benchmark}}}.


\subsection{ER and SATLIB Benchmark Results}

\begin{table*}[t]
\small
    \centering
    \resizebox{0.95\textwidth}{!}{\begin{tabular}{c|c|ccc|ccc}
    \toprule
    \multirow{2}{*}{\textbf{Method}} & \multirow{2}{*}{\textbf{Type}} & \multicolumn{3}{c|}{\textbf{Dataset: SATLIB}} &  \multicolumn{3}{c}{\textbf{Dataset: ER}} \\
     &  & Training Data & MIS Size ($\uparrow$) & Run-time ($\downarrow$) & Training Data & MIS Size ($\uparrow$) & Run-time ($\downarrow$)  \\
    \midrule
ReduMIS \cite{lamm2016finding} & Heuristics & $\times$ & \textbf{425.96} & 37.58 &  $\times$ & 44.87 & 52.13 \\
\midrule
CP-SAT \cite{cpsatlp} & Exact & $\times$ & \textbf{425.96} & 0.56 &   $\times$ & 41.15 & 64 \\
Gurobi \cite{Gurobi} & Exact & $\times$ & \textbf{425.96} & 8.32 &  $\times$ & 39.14 & 64 \\
\midrule
GCN \cite{li2018combinatorial} & SL+G & SATLIB & 420.66 & \underline{23.05} & SATLIB & 34.86 & \underline{23.05} \\
LwD \cite{ahn2020learning} & RL+S & SATLIB & 422.22 & \underline{18.83} & ER & 41.14 & \underline{6.33} \\
DIMES \cite{qiu2022dimes} & RL+TS & SATLIB & 423.28 & \underline{20.26} & ER & 42.06 & \underline{12.01} \\
DIFUSCO \cite{sun2023difusco} & RL+G & SATLIB & 424.5 & \underline{8.76} & ER & 38.83 & \underline{8.8} \\
DIFUSCO \cite{sun2023difusco} & RL+S & SATLIB & 425.13 & \underline{23.74} & ER & 41.12 & \underline{26.27} \\
\midrule
iSCO \cite{sun2023revisiting} & S & $\times$ & 422.664 & ``22.35'' & $\times$ & 44.57 & ``14.88'' \\
\midrule
pQO-MIS (i.e., $\gamma'=0$) & QO & $\times$ & 412.888 & 16.964 & $\times$ & 40.398 & 5.78 \\
\midrule
\textbf{pCQO-MIS} & QO & $\times$ & 425.148 & 56.722 & $\times$ & \textbf{45.109} & 54.766 \\
\textbf{pCQO-MIS} & QO & $\times$ & 424.686 & 31.901 & $\times$ & 45.078 & 40.555 \\
\textbf{pCQO-MIS} & QO & $\times$ & 424.096 & 20.3 & $\times$ & 44.969 & 20.875 \\
\textbf{pCQO-MIS} & QO & $\times$ & 423.706 & 16.394 & $\times$ & 44.5 & 5.563 \\

    \bottomrule

    \end{tabular}}
    \vspace{0.1cm}
     \caption{{Benchmark dataset results in terms of \textbf{average MIS size} and \textbf{total sequential run-time} (minutes). RL, SL, G, QO, S, and TS represent Reinforcement Learning, Supervised Learning, Greedy decoding, Quadratic Optimization, Sampling, and Tree Search, respectively. The results of the learning-based methods (other than DIFUSCO) and ReduMIS are sourced from \cite{qiu2022dimes} and run using a single NVIDIA A100 40GB GPU and AMD EPYC 7713 CPU. The results of DIFUSCO are sourced from \cite{sun2023difusco} and run using a single NVIDIA V100 GPU and Intel Xeon Gold 6248 CPU. The run-time for learning methods exclude the training time (underlined). \textcolor{black}{pCQO-MIS run-times exclude the hyper-parameters tuning time that was done on one graph for each dataset (see Appendix~\ref{sec: appen hyper parameter tuning procedure}).} The pCQO-MIS, CP-SAT, and Gurobi results are run using an NVIDIA RTX3070 GPU and Intel I9-12900K CPU. The results for iSCO were produced using an NVIDIA A100 40GB GPU and AMD EPYC 7H12 CPU. We note that the run time reported in iSCO (Table~1 in \cite{sun2023revisiting}) is for running multiple graphs in parallel, not a sequential total run time. We evaluated iSCO in the same way. If they are run sequentially, the extrapolated run-time is $\sim$9000 minutes for SATLIB and $\sim$140 minutes for ER. ReduMIS employs the local search procedure from \cite{andrade2012fast} for multiple rounds, which no other method in the table uses, following the study in \cite{bother2022s}. Different run-times for pCQO-MIS correspond to using different number of batches (See Appendix~\ref{sec: appen main results with batches}). For more details about the requirements of each method, see Appendix~\ref{sec: appen setting comp}.}}
    \label{tbl:main_results}
    \vspace{-0.2cm}
\end{table*}
%

Here, we present the results of pCQO-MIS alongside the considered baselines, using the SATLIB and ER benchmarks. These results are measured in terms of average MIS size across the graphs in the dataset and the total sequential run-time (in minutes) required to obtain the results for all the graphs. Results are given in Table~\ref{tbl:main_results}, where the last 4 rows show the pCQO-MIS results for different run-times. We note that the ER results from the exact solvers are limited to 30 seconds per graph to ensure total run-times that are comparable to those of other methods. In what follows, we provide observations on these results.

\begin{figure*}[t]
    \centering
    \includegraphics[width = 0.86\textwidth]{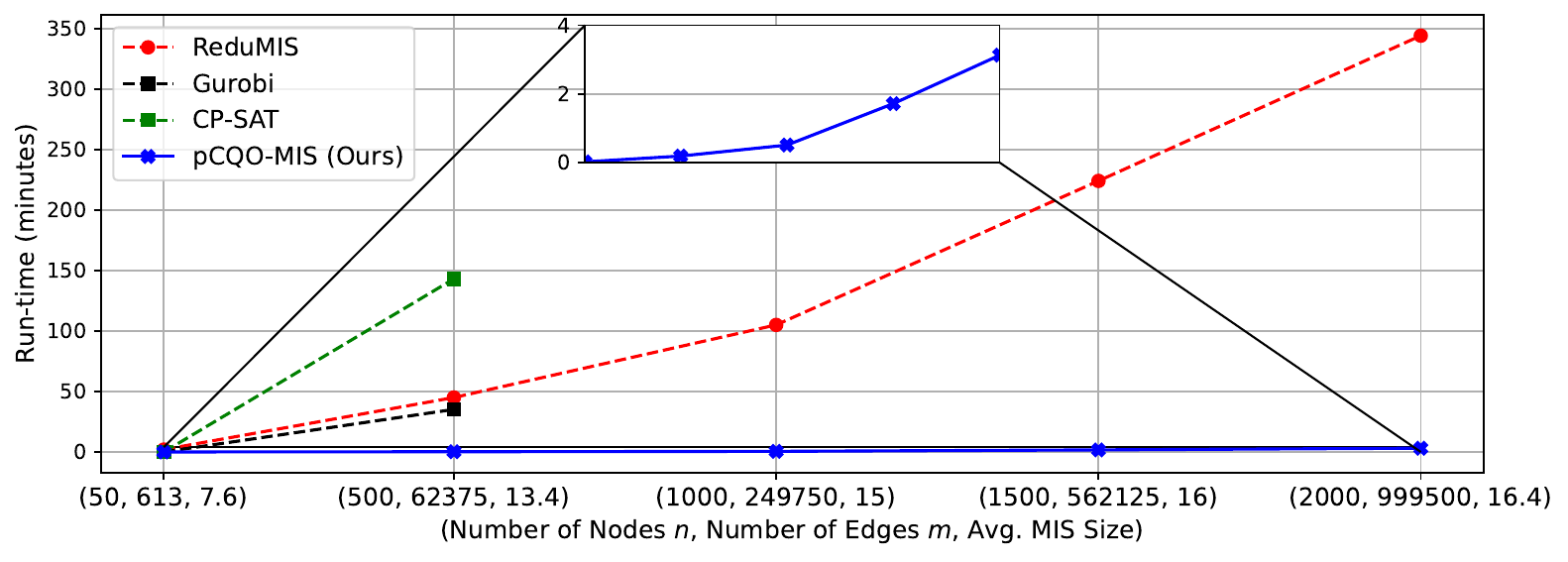}
    \vspace{0.05cm}
    \caption{{Total run-time in minutes (y-axis) of pCQO-MIS, ReduMIS, CP-SAT, and Gurobi for the \textbf{GNM} graphs with $n\in\{50, 500, 1000, 1500, 2000\}$, $m=\lceil \frac{n(n-1)}{4} \rceil$, and the average MIS size of 5 graphs (x-axis). This choice of the number of edges indicates that half of the total possible edges (w.r.t. the complete graph) exist. Here, we also use an NVIDIA RTX3070 GPU and Intel I9-12900K CPU. For $n>500$, Gurobi and CP-SAT are not included due to excessive run-times. }}
    \label{fig:figure_and_table scalability}
\end{figure*}

\begin{itemize}[leftmargin=*]

    \item All learning-based methods, except for GCN, require training a separate network for each graph dataset, as shown in the third and sixth columns of Table~\ref{tbl:main_results}, highlighting their generalization limitations. In contrast, our method is more generalizable, requiring only the tuning of hyper-parameters for each set of graphs. \textcolor{black}{See Appendix~\ref{sec: appen OOD performance of DIFUSCO} for a comparison between pCQO-MIS and DIFUSCO using graphs with densities that are different from the training setting of DIFUSCO}.

    \item When compared to learning-based approaches, our method outperforms all baseline methods in terms of MIS size, all without requiring any training data. We note that the reported run times for learning-based methods exclude training time, which can vary depending on several factors, including graphs sizes, available computing resources, the number of data points, and the specific neural network architecture used. In under 6 minutes (which is shorter than the inference time of any learning-based method), pCQO-MIS reports larger MIS sizes than any learning method \textcolor{black}{(44.5 vs. 42.06)}. Furthermore, our approach does not rely on additional techniques such as Greedy Decoding \cite{graikos2022diffusion} and Monte Carlo Tree Search \cite{fu2021generalize}.


    \item When compared to the sampling approach, iSCO, our method reports larger MIS sizes while requiring significantly reduced sequential run-time. We note that the iSCO paper ~\cite{sun2023revisiting} reports a lower run time as compared to other methods. This reported run time is achieved by evaluating the test graphs in parallel, in contrast to all other methods that evaluated them sequentially. To fairly compare methods in our experiments, we opted to report sequential test run time only. We conjecture that the extended sequential run-time of iSCO, compared to its parallel run-time, is due to its use of simulated annealing. Because simulated annealing depends on knowing the energy of the previous step when determining the next step, it is inherently more efficient for iSCO to solve many graphs in parallel than in series.


    \item For SATLIB, which consists of sparser graphs (relative to the ER graphs with $p=0.15$), pCQO-MIS falls just short when compared to ReduMIS, Gurobi, and CP-SAT (exact and heuristic solvers). The reason ReduMIS achieves SOTA results here is that it applies a large set of MIS-specific graph reductions, along with the 2-opt local search procedure \cite{andrade2012fast}. pCQO-MIS and other baselines do not apply the 2-opt procedure following the study in \cite{bother2022s} where it was conjectured that most methods will converge to the same solutions if this \textcolor{black}{local search} procedure is applied (\textcolor{black}{for each solution found}). We note that ReduMIS iteratively applies this heuristic. For denser graphs, most of these graph reductions are not applicable. Gurobi and CP-SAT solve the ILP in \eqref{eqn: MIS ILP} where the number of constraints is equal to the number of edges in the graph. This means that Gurobi and CP-SAT are expected to perform better on SATLIB, where there are fewer constraints, compared to denser graphs like ER.

    \item On ER, our method not only reports a larger average MIS size but also generally requires less run-time. Specifically, in under 21 minutes, our method (pCQO-MIS) achieves better results than ReduMIS, CP-SAT, and Gurobi. In under 55 minutes, we achieve \textbf{45.109}. We emphasize that we outperform the SOTA MIS heuristic solver and two commercial solvers\footnote{We note that learning-based methods, such as \cite{qiu2022dimes, sun2023difusco}, use ReduMIS to label training graphs under the supervised learning setting.}.

    


    \item Given the same run-time, when comparing the results of pQO-MIS (i.e., $\gamma'=0$) and the results of pCQQ-MIS, we observe that when the MC term is included, pCQO-MIS reports larger MIS sizes. On average, using the MC term yields nearly 11 (resp. 4) nodes improvement for SATLIB (resp. ER). \textcolor{black}{A detailed study about the impact of the clique term is given in Appendix~\ref{sec: appen impact of clique}.} 


    
\end{itemize}

\begin{figure*}[t]
    \centering
    \includegraphics[width = 0.88\textwidth]{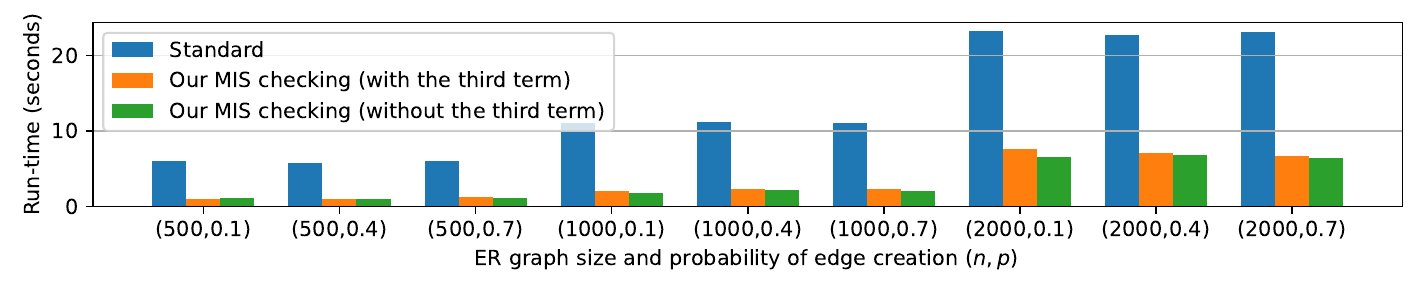}
    \vspace{-0.2cm}
    \caption{{Average run-time results of our MIS checking vs. the standard iterative approach across different graph sizes and densities. Orange and green results correspond to using the criteria in \eqref{eqn: MIS checking} and \eqref{eqn: MIS checking with QUBO}, respectively. }}
    \label{fig: run time of MIS checking}
\end{figure*}

\subsection{Scalability Results}\label{sec: exp scal}



It is well-established that relatively denser graphs pose greater computational challenges compared to sparse graphs. 
This is due to the applicability of graph reduction techniques such as the LP reduction method in \cite{nemhauser1975vertex}, and the unconfined vertices rule \cite{xiao2013confining} (see \cite{lamm2016finding} for a complete list of the graph reduction rules that apply only on sparse graphs). For instance, by simply applying the LP graph reduction technique, the large-scale highly sparse graphs (with several hundred thousand nodes), considered in Table~5 of \cite{li2018combinatorial}, reduce to graphs of a few thousands nodes with often dis-connected sub-graphs that can be treated independently.

Therefore, the scalability and performance of ReduMIS are significantly dependent on the sparsity of the graph. This dependence emerges from the iterative application of various graph reduction techniques (and the 2-opt local search in \cite{andrade2012fast}) in ReduMIS, specifically tailored for sparse graphs. For instance, the ReduMIS results presented in Table~2 of \cite{ahn2020learning} are exclusively based on very large and highly sparse graphs. This conclusion is substantiated by both the sizes of the considered graphs and the corresponding sizes of the obtained MIS solutions. As such, in this subsection, we investigate the scalability of pCQO-MIS against the SOTA methods: ReduMIS, Gurobi, and CP-SAT on denser graphs. 

To generate suitably dense graphs, we utilized the NetworkX GNM graph generator with the number of edges set to $m=\lceil \frac{n(n-1)}{4} \rceil$. It is important to note that the density of these graphs is significantly higher than those considered in the previous subsection (\textcolor{black}{and most of the graphs considered in recent learning-based and sampling studies}). This choice of the number of edges in the GNM graph generator indicate that half of the total possible edges (w.r.t. the complete graph) exist.

Results are provided in Figure~\ref{fig:figure_and_table scalability}. As observed, for dense graphs, as the graph size increases, our method requires significantly less run-time compared to all baselines, while reporting the same average MIS size (third entry of each tuple in the x-axis). For instance, when $n$ is $500$, our method requires less than 12 seconds to solve the 5 graphs, whereas other baselines require 35 minutes or more to achieve the same MIS size. For the case of $n=2000$, our method requires less than 4 minutes whereas ReduMIS requires nearly 350 minutes. These results indicate that, unlike ReduMIS and ILP solvers, the run-time of our method scales only with the number of nodes in the graph, which is a significant improvement. 

\subsection{Impact of the Proposed MIS Checking Criterion} \label{sec: appen MIS checking speed ups}

In this subsection, we evaluate the impact of the proposed MIS checking method on the run-time performance of the pCQO-MIS algorithm. Specifically, we execute pCQO-MIS for $T=1000$ iterations, performing MIS checking at each iteration. The average run-time (seconds) results for 10 ER graphs, covering various graph sizes and densities, are illustrated in Figure~\ref{fig: run time of MIS checking}, with the x-axis representing different values of $n$ (graph size) and $p$ (probability of edge creation that indicates density). 
We compare these results to the standard MIS checking approach, which involves iterating over all nodes to examine their neighbors, as discussed in Section~\ref{sec: method MIS checking}. The results suggest that the execution time for pCQO-MIS is significantly reduced with our efficient implementation compared to the standard method as the graph order increases.







\section{Conclusion}

This paper addressed the challenging Maximum Independent Set (MIS) Problem within the domain of Combinatorial Optimization by introducing a clique-informed continuous quadratic formulation. By eliminating the need for any training data, pCQO-MIS distinguishes itself from conventional learning approaches. Utilizing momentum-based gradient descent and a parallel GPU implementation, our straightforward yet effective method demonstrates competitive performance compared to state-of-the-art learning, sampling, and heuristic methods. This research offers a distinctive perspective on approaching discrete optimization problems through a parameter-efficient procedure optimized from the problem structure rather than from datasets.




\section*{Impact Statement}

\textcolor{black}{This work introduces a novel quadratic optimization framework, pCQO-MIS, that advances combinatorial optimization research by tackling the Maximum Independent Set (MIS) problem with enhanced scalability and performance. By leveraging a clique-informed quadratic formulation and momentum-based parallel optimization, pCQO-MIS achieves superior MIS sizes while maintaining competitive run-times. Unlike data-centric approaches, it eliminates dependency on labeled data and out-of-distribution tuning, offering robust generalization across graph instances. Furthermore, its run-time efficiency, scaling with nodes rather than edges, positions pCQO-MIS as a transformative approach for large-scale graph problems, bridging the gap between theory and practical applicability in optimization.}

\section*{Acknowledgments}

\textcolor{black}{JL DARPA Young Faculty Award (YFA) D24AP00265. CY acknowledges NSF2349670 and NSF2403135. The authors would like to thank Curie Kim and Mingju Liu from the University of Maryland, College Park, for their help in evaluating our method with the large ER graphs. We would also like to thank Yikai Wu and Haoyu Zhao from Princeton University for insightful discussions about the optimization in pCQO-MIS, and for sharing the BA graphs.} 





\nocite{langley00}

\bibliography{refs}
\bibliographystyle{icml2025}


\onecolumn
\par\noindent\rule{\textwidth}{1pt}
\begin{center}
{\Large \bf Appendix}
\end{center}
\vspace{-0.1in}
\par\noindent\rule{\textwidth}{1pt}
\appendix


In this appendix, we begin with detailed proofs in Appendix~\ref{sec: appen proofs}, followed by a discussion on how the proposed objective corresponds to a dataless quadratic neural network in Appendix~\ref{sec: appen link to dQNN}. Appendix~\ref{sec: appen unique saddle} presents a study on the feasibility of saddle points. We review prior MIS solvers in Appendix~\ref{sec: append: related}, and provide additional experimental results, implementation details, and ablation studies in Appendix~\ref{sec: appen additional results}.


\section{Proofs}
\label{sec: appen proofs}

We begin by re-stating our main optimization problem: 
\begin{equation}
\begin{gathered} \label{eqn: MIS CQO main sumation in th}
 \min_{\mathbf{x}\in [0,1]^n} f(\mathbf{x}):= -\sum_{v\in V} \mathbf{x}_v + \gamma \sum_{(u,v)\in E} \mathbf{x}_v \mathbf{x}_u - \gamma' \sum_{(u,v)\in E'} \mathbf{x}_v \mathbf{x}_u =-\mathbf{e}_n^T\mathbf{x}+\frac{\gamma}{2} \mathbf{x}^T \mathbf{A}_G \mathbf{x}-
\frac{\gamma'}{2} \mathbf{x}^T \mathbf{A}_{G'} \mathbf{x}. \!\!\! 
 \end{gathered} 
\end{equation}
The gradient of $f(\mathbf{x})$ in \eqref{eqn: MIS CQO main sumation in th} is 
\begin{equation}\label{eqn: th ins grad}
\nabla_\mathbf{x}f(\mathbf{x}) = \mathbf{g}(\mathbf{x}) =  -\mathbf{e}_n + (\gamma \mathbf{A}_G - \gamma' \mathbf{A}_{G'})\mathbf{x}\:,
\end{equation}
where, for some $v\in V$, we have 
\begin{equation}\label{eqn: th ins grad per node}
    \frac{\partial f(\mathbf{x})}{ \partial \mathbf{x}_v} = -1 + \gamma \sum_{u\in \mathcal{N}(v)}\mathbf{x}_u - \gamma' \sum_{u\in \mathcal{N}'(v)}\mathbf{x}_u\:.
\end{equation}
\subsection{Proof of Lemma \ref{th: lemma hessian is not PSD}}
\textbf{Re-statement}: For any non-complete graph $G$, the constant hessian of $f(\mathbf{x})$ in \eqref{eqn: MIS CQO main matrix}, i.e., $\gamma \mathbf{A}_{G} - \gamma' \mathbf{A}_{G'}$, is always a non-positive-semidefinite (non-PSD) matrix.

\begin{proof}
    The hessian, $(\gamma \mathbf{A}_G - \gamma'\mathbf{A}_{G'})$, is independent of $\mathbf{x}$. If $(\gamma \mathbf{A}_G - \gamma' \mathbf{A}_{G'})$ is PSD, then, by definition of PSD matrices, we must have
    \begin{equation}
     \mathbf{x}^T(\gamma \mathbf{A}_G - \gamma' \mathbf{A}_{G'})\mathbf{x}\geq 0, \forall \mathbf{x}\in [0,1]^n\:,   
    \end{equation}
     which is not possible as for any $\mathbf{x}_0$ that corresponds to a MaxIS, we have $\mathbf{x}_0^T(\gamma \mathbf{A}_G )\mathbf{x}_0 = 0$ (no edges in MaxIS w.r.t. $G$) and $\gamma' \mathbf{x}_0^T \mathbf{A}_{G'} \mathbf{x}_0 < 0$ (a MaxIS in $G$ is a maximal clique in $G'$). 
\end{proof}


\subsection{Proof of Theorem \ref{th: quadratic objective}}

\textbf{Re-statement}: Given an arbitrary graph $G=(V,E)$ and its corresponding formulation in \eqref{eqn: MIS CQO main sumation in th}, suppose the size of any MIS of $G$ is $k$. 
    Then, $\gamma \geq 1+ \gamma' k$ is necessary and sufficient for all MaxIS vectors to be local minimizers of \eqref{eqn: MIS CQO main sumation in th}  for arbitrary graphs.

\begin{proof}
Let $\mathcal{I}$ be a MaxIS. Define the vector $\mathbf{x}^\mathcal{I}$ such that it contains $1$'s at positions corresponding to the nodes in the set $S$, and $0$'s at all other positions. For any MaxIS to be a local minimizer of \eqref{eqn: MIS CQO main sumation in th}, it is sufficient and necessary to require that
\begin{align}
& \frac{\partial f(\mathbf{x})}{ \partial \mathbf{x}_v} \geq 0, \quad \forall v\notin \mathcal{I} \textrm{  and} \label{eq:left2}\\
& \frac{\partial f(\mathbf{x})}{ \partial \mathbf{x}_v} \leq 0, \quad \forall v\in \mathcal{I} .\label{eq:right2}
\end{align}
 Here, $\mathbf{x}_v$ is the element of $\mathbf{x}$ at the position corresponding to node $v$. Equation \eqref{eq:left2} is derived because if $v\notin \mathcal{I}$, then $\mathbf{x}^\mathcal{I}_v=0$ (by the definition of $\mathbf{x}^\mathcal{I}$) so it is at the left boundary of the interval $[0,1]$. For the left boundary point to be a local minimizer, it requires the derivative to be non-negative (i.e., moving towards the right only increases the objective). Similarly, when $v\in \mathcal{I}$, $\mathbf{x}^\mathcal{I}_v=1$, is at the right boundary for \eqref{eq:right2}, at which the derivative should be non-positive.

The derivative of $f$ computed in \eqref{eqn: th ins grad per node} can be rewritten as 
\begin{equation}\label{eq:expre_f}
\frac{\partial f(\mathbf{x})}{ \partial \mathbf{x}_v}  = -1 + \gamma m_v - \gamma' \ell_v, \quad \forall v \notin \mathcal{I},
\end{equation}
where
\begin{equation}\label{eq:mv}
m_v:= \left|\{u\in \mathcal{N}(v) \cap \mathcal{I} \}\right|\:,
\end{equation}
is the number of neighbors of $v$ in $\mathcal{I}$ and 
\begin{equation}\label{eq:lv}
\ell_v:= \left|\{u\in \mathcal{N}'(v) \cap \mathcal{I} \}\right|\:,
\end{equation}
is the number of non-neighbors of $v$ in $\mathcal{I}$, Here, $\mathcal{N}'(v)=\{u: (u,v)\in E'\}$.  

By this definition, we immediately have $1\leq m_v \leq |\mathcal{I}| $ and $0\leq \ell_v \leq |\mathcal{I}|$, where the upper and lower bounds for $m_v$ and $\ell_v$ are all attainable by some special graphs. Note that the lower bound of $m_v$ is $1$, and that is due the fact that $\mathcal{I}$ is a MaxIS, so any other node (say $v$) will have at least $1$ edge connected to a node in $\mathcal{I}$.

Plugging \eqref{eq:expre_f} into \eqref{eq:left2}, we obtain
%
\begin{equation} \label{eq:gamma_Q}
\gamma \geq \frac{1+\gamma' \ell_v}{m_v}\:.
\end{equation} 
Since we're seeking a universal $\gamma$ for all the graphs, we must set $m_v$ to its lowest possible value, $1$, and $\ell_v$ to its highest possible value  $k$ (both are attainable by some graphs), and still requires $\gamma$ to satisfy \eqref{eq:gamma_Q}. 
This means it is necessary and sufficient to require $\gamma\geq 1+\gamma' k$. In addition, \eqref{eq:right2} is satisfied unconditionally and therefore does not impose any extra condition on $\gamma$. 
\end{proof}


\subsection{Proof of Lemma~\ref{th: all fixed are binary and MIS}} 

\textbf{Re-statement}: All local minimizers of \eqref{eqn: MIS CQO main sumation in th} are binary vectors.


\begin{proof}
Let $\mathbf{x}^*$ be any local minimizer of~\eqref{eqn: MIS CQO main sumation in th}. If all the coordinates of $\mathbf{x}^*$ are either 0 or 1, then $\mathbf{x}^*$ is binary and the proof is complete, otherwise, at least one coordinate of $\mathbf{x}^*$ is in the interior $(0,1)$ and we aim to prove that this is not possible (i.e. such a non-binary $\mathbf{x}^*$ cannot exist as a minimizer) by contradiction. We assume the non-binary $\mathbf{x}^*$ exists, and denote the set of non-binary coordinates as
\begin{equation}
    J:= \{j: \mathbf{x}^*_j\in (0,1)\}\:.
\end{equation}
Since $\mathbf{x}^*$ is non-binary, $J\neq \emptyset$. Since the objective function $f(\mathbf{x})$ of~\eqref{eqn: MIS CQO main sumation in th} is twice differentiable with respect to all $\mathbf{x}_j$ with $\mathbf{x}_j \in (0,1)$, then a necessary condition for $\mathbf{x}^*$ to be a local minimizer is that 
\[
\nabla f(\mathbf{x}^*) \big\vert_{J} = 0,\quad  \nabla^2 f(\mathbf{x}^*) \big\vert_{J} \succeq 0,
\]
where $\nabla f(\mathbf{x}^*) \big\vert_{J} $ is the vector $\nabla f(\mathbf{x}^*) $ restricted to the index set $J$, and $\nabla^2 f(\mathbf{x}^*) \big\vert_{J}$ is the matrix $\nabla^2 f(\mathbf{x}^*) $ whose row and column indices are both restricted to the set $J$. 

However, the second necessary condition $\nabla^2 f(\mathbf{x}^*) \big\vert_{J} \succeq 0$ cannot hold. Because if it does, then we must have $\mathrm{tr}(\nabla^2 f(\mathbf{x}^*) \big\vert_{J} )> 0$ (the trace cannot strictly equal to 0 as $\nabla^2 f(\mathbf{x}^*)  \big\vert_{J} = \mathbf{I}_J (\gamma  \mathbf{A}_{G} - \gamma' \mathbf{A}_{G'}) \mathbf{I}_J^T \neq 0$ where $\mathbf{I}_J$ denotes the identity matrix with row indices restricted to the index set $J$). However,  on the other hand, we have
\[
 \mathrm{tr}( \nabla^2 f(\mathbf{x}^*) \big\vert_{J}) = \mathrm{tr}( \mathbf{I}_J (\gamma  \mathbf{A}_{G} - \gamma' \mathbf{A}_{G'}) \mathbf{I}_J^T ) = 0\:,
\]
as the diagonal entries of $\mathbf{A}_G$ and $\mathbf{A}_{G'}$ are all 0,
which leads to a contradiction. 
\end{proof}

\subsection{Proof of Theorem~\ref{th: all fixed are binary and MIS}} \label{sec: appen proof of 2nd th}

\textbf{Re-statement}: Given graph $G=(V,E)$ and set $\gamma\geq 1+\gamma' \Delta(G')$, all local minimizers of \eqref{eqn: MIS CQO main sumation in th} correspond to a MaxIS in $G$.

\begin{proof}

     By lemma~\ref{th: all local mins are binary}, we only consider binary vectors as local minimizers. With this, we first prove that all local minimizers are Independent Sets (ISs). Then, we show that any IS, that is not a maximal IS, is not a local minimizer.

      Here, we show that any local minimizer is an IS. By contradiction, assume that vector $\mathbf{x}$, by which $\mathbf{x}_v=\mathbf{x}_w=1$ such that $(v,w)\in E$ (a binary vector with an edge in $G$), is a local minimizer. Since $\mathbf{x}_v=1$ is at the right boundary of the interval $[0,1]$, for it to be a local minimizer, we must have $\frac{\partial f}{\partial \mathbf{x}_v} \leq 0$. Together with \eqref{eqn: th ins grad per node}, this implies
        \begin{equation}\label{eqn: th all ones proj arg}
            -1+\gamma \sum_{u\in \mathcal{N}(v)} \mathbf{x}_u - \gamma' \sum_{u\in \mathcal{N}'(v)} \mathbf{x}_u  \leq 0\:.
        \end{equation}
        Re-arranging \eqref{eqn: th all ones proj arg} yields to
        \begin{equation}\label{eqn: th edge case}
            \gamma \sum_{u\in \mathcal{N}(v)} \mathbf{x}_u \leq 1+ \gamma' \sum_{u\in \mathcal{N}'(v)} \mathbf{x}_u\:.
        \end{equation}
        Given that $\gamma\geq 1+\gamma' \Delta(G')$, the condition in \eqref{eqn: th edge case} can not be satisfied even if the LHS attains its minimum value (which is $\gamma n$) and the RHS attains a maximum value. The maximum possible value of the RHS is $1+\mathrm{d}'(v) = n-\mathrm{d}(v)$, where $\mathrm{d}'(v)$ is the degree of node $v$ in $G'$, and the maximum possible value of $\mathrm{d}'(v)$ is $\Delta(G')$. This means that when an edge exists in $\mathbf{x}$, it can not be a fixed point. Thus, only ISs are local minimizers.

         Here, we show that Independent Sets that are not maximal are not local minimizers. Define vector $\mathbf{x}\in \{0,1\}^n$ that corresponds to an IS $\mathcal{I}(\mathbf{x})$. This means that there exists a node $u\in V$ that is not in the IS and is not in the neighbor set of all nodes in the IS. Formally, if there exists $u\notin \mathcal{I}(\mathbf{x})$ such that $\forall w\in \mathcal{I}(\mathbf{x}), u \notin \mathcal{N}(w)$, then $\mathcal{I}(\mathbf{x})$ is an IS, not a maximal IS. Note that such an $\mathbf{x}$ satisfies  $\mathbf{x}_u=0$ and
                   \begin{equation} \label{eqn: th IS arg}
        \frac{\partial f}{\partial \mathbf{x}_v} = -1 + \gamma \sum_{u\in \mathcal{N}(v)} \mathbf{x}_u -\gamma' \sum_{u\in \mathcal{N}'(v)} \mathbf{x}_u  = -1 -\gamma' \sum_{u\in \mathcal{N}'(v)} \mathbf{x}_u  < 0\:,
        \end{equation}
  which implies that increasing $\mathbf{x}_u$ can further decrease the function value, contradicting to $\mathbf{x}$ being a local minimizer.
        In \eqref{eqn: th IS arg}, the second summation is $0$ as $\mathcal{N}(v)\cap \mathcal{I}(\mathbf{x}) = \emptyset$, which results in $-(1+\gamma' \sum_{u\in \mathcal{N}'(v)} \mathbf{x}_u)$ that is always negative. Thus, any binary vector that corresponds to an IS that is not maximal is not a local minimizer. \end{proof}

\subsection{Proof of Theorem~\ref{th: one uniq point}}
\label{sec: appen saddle point}


\textbf{Re-statement}: For any graph $G$, assume that there exists a point $\mathbf{x}'$ such that $\nabla_{\mathbf{x}}f(\mathbf{x}') = \mathbf{0}$, i.e., $\mathbf{x}' = (\gamma \mathbf{A}_G - \gamma' \mathbf{A}_{G'})^{-1} \mathbf{e}_n$. Then, $\mathbf{x}'$ is not a local minimizer of \eqref{eqn: MIS CQO main sumation in th} and therefore does not correspond to a MaxIS.


\begin{proof}
    By Lemma~\ref{th: all local mins are binary}, we know that all local minimizers are binary. By contradiction, assume that $\mathbf{x}'$ is a binary local minimizer. Then, the system of equations $(\gamma \mathbf{A}_G - \gamma' \mathbf{A}_{G'})\mathbf{x}' = \mathbf{e}_n$ implies that, for all $v\in V$, the following equality must be satisfied. 
        \begin{equation}\label{eqn: linear system}
            \gamma \sum_{u\in \mathcal{N}(v)}\mathbf{x}_u - \gamma' \sum_{u\in \mathcal{N}'(v)}\mathbf{x}_u = 1\:.
        \end{equation}
        If $\mathbf{x}'$ is binary and corresponds to a MaxIS in the graph, then the first term of \eqref{eqn: linear system} is always $0$, which reduces \eqref{eqn: linear system} to 
        \begin{equation}\label{eqn: linear system reduced}
            - \gamma' \sum_{u\in \mathcal{N}'(v)}\mathbf{x}_u = 1\:.
        \end{equation}
        Eq.\eqref{eqn: linear system reduced} is an equality that can not be satisfied as $\mathbf{x}'_v \geq 0, \forall v\in V$ and $\gamma'\geq1$. Thus, $\mathbf{x}'$ is not a local minimizer. \end{proof}

\section{\textcolor{black}{Connection to dataless Quadratic Neural Networks}} \label{sec: appen link to dQNN}

Our proposed objective in \eqref{eqn: MIS CQO main matrix} corresponds to a dataless quadratic neural networks (dQNN), as illustrated in Figure~\ref{fig: quad dNN} (\textit{right}).
Here, the dQNN comprises two fully connected layers. 
The initial activation-free layer encodes information about the nodes (top $n=5$ connections), edges of $G$ (middle $m=4$ connections), and edges of $G'$ (bottom $m'=6$ connections), all without a bias vector. 
The subsequent fully connected layer is an activation-free layer performing a vector dot-product between the fixed weight vector (with $-1$ corresponding to the nodes and edges of $G'$ and the edges-penalty parameter $\gamma$), and the output of the first layer.

\begin{figure*}[htp!]
    \centering
    \includegraphics[width=14cm]{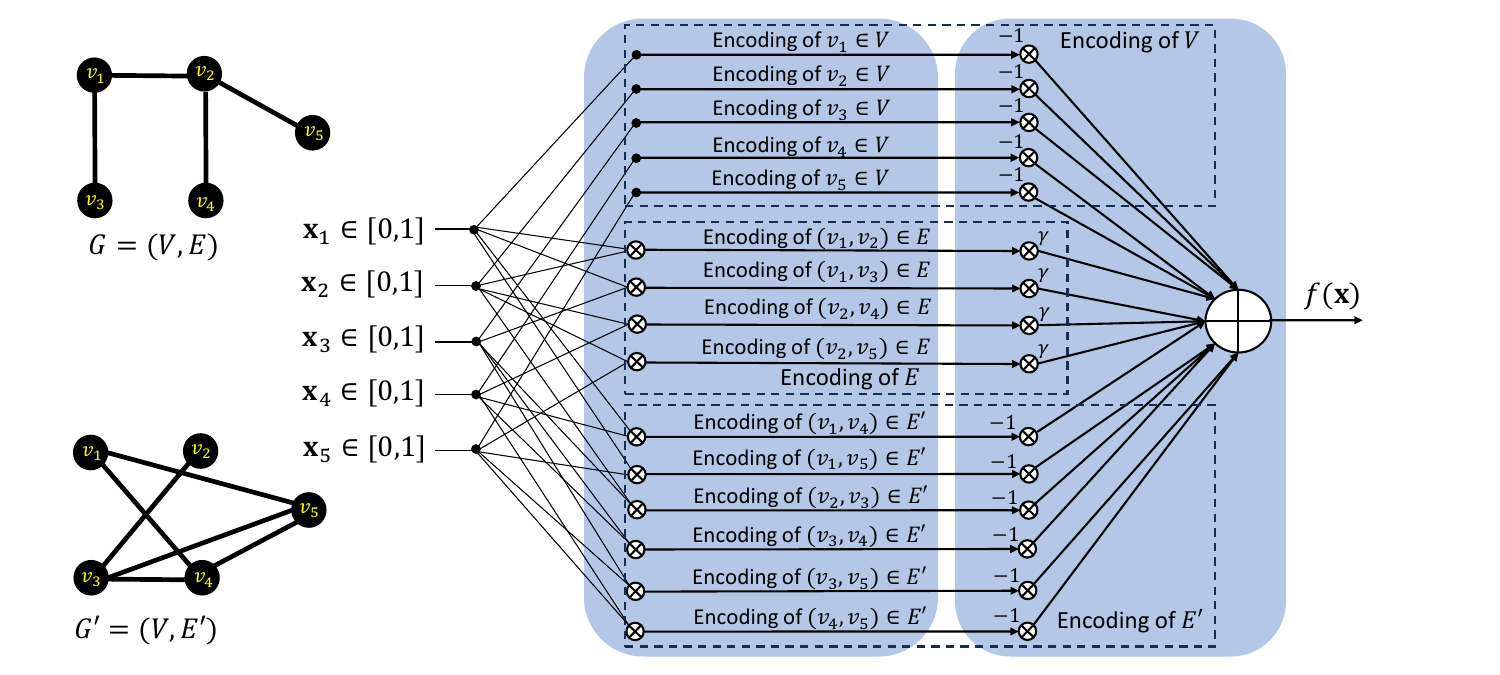}
    \vspace{-0.3cm}
    \caption{{Graph $G$ (\textit{left}) and its corresponding dataless quadratic neural network (\textit{right}). }}
    \label{fig: quad dNN}
    \vspace{-.15in}
\end{figure*}

\section{Empirical Observations on the Non-Extremal Stationary Point $\mathbf{x}'$} \label{sec: appen unique saddle}

In this section, we empirically demonstrate how the non-extremal stationary point $\mathbf{x}'$, analyzed in Theorem~\ref{th: one uniq point}, varies with the type of graph. Specifically, we aim to show that, for many types of graphs, this saddle point is outside the box constraints, depending on the graph connectivity. To this end, we consider GNM and ER graphs with different densities, as well as small and large graphs from the SATLIB dataset. 

In Figure~\ref{fig: saddle}, we obtain $\mathbf{x}'=(\gamma\mathbf{A}_G - \gamma' \mathbf{A}_{G'})^{-1}\mathbf{e}_n$ with $\gamma = n$ and $\gamma' = 1$ for every considered graph. Each subplot in Figure~\ref{fig: saddle} shows the values of $\mathbf{x}'_v$ (y-axis) for every node $v\in V$ (x-axis), with the title specifies the graph used.  

As observed, among all the graphs, only the very-high-density GNM graph (with results shown inside the dashed box in Figure~\ref{fig: saddle}) has $\mathbf{x}'\in [0,1]^{n}$ (i.e., within the box-constraints of \eqref{eqn: MIS CQO main matrix}). Note that this graph was generated with $m=4945$ where the total number of possible edges in the complete graph with $n=100$ is $4950$ edges. 

For all other graphs, we have $\mathbf{x}'\notin [0,1]^n$, as indicated by the values strictly below $0$. This means that by applying the projection in \eqref{eqn: MGD update}, $\mathbf{x}'$ is infeasible.  

\begin{figure}[htp]
    \centering
    \includegraphics[width = 1\textwidth]{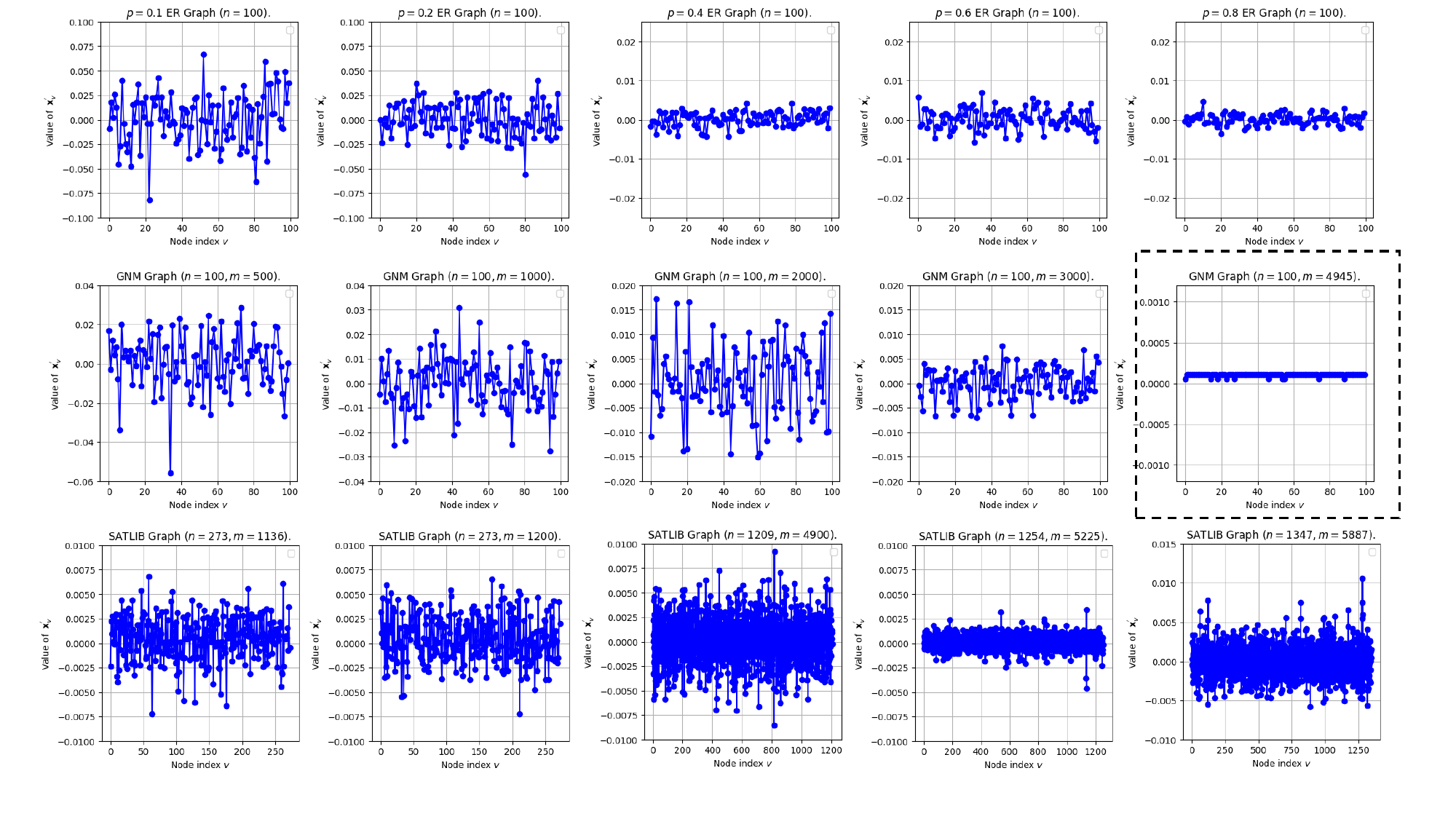}
    \vspace{-1.2cm}
    \caption{{Values of the non-extremal stationary point $\mathbf{x}'$ (y-axis) w.r.t. every node $v\in V$ (x-axis) across different ER and GNM graphs as well as small and large SATLIB graphs, as indicated by the title of each subplot. Among all the considered graphs, only the high-density GNM graph, indicated by the dashed square, has $\mathbf{x}'\in [0,1]^{100}$.   }}
    \label{fig: saddle}
\end{figure}





\section{Related Work}
\label{sec: append: related}





{\bf 1) Exact and Heuristic Solvers:} Exact approaches for \nph problems typically rely on branch-and-bound global optimization techniques. However, exact approaches suffer from poor scalability, which limits their uses in large MIS problems~\cite{dai2016discriminative}. 
This limitation has spurred the development of efficient approximation algorithms and heuristics. 
For instance, the well-known NetworkX library~\cite{SciPyProceedings_11} implements a heuristic procedure for solving the MIS problem \cite{boppana1992approximating}. 
These polynomial-time heuristics often incorporate a mix of sub-procedures, including greedy algorithms, local search sub-routines, and genetic algorithms \cite{williamson2011design}. 
However, such heuristics generally cannot theoretically guarantee that the resulting solution is within a small factor of optimality. 
In fact, inapproximability results have been established for the MIS problem \cite{berman1992complexity}. 

Among existing MIS heuristics, ReduMIS~\cite{lamm2016finding} has emerged as the leading approach. 
The ReduMIS framework contains two primary components: (\textit{i}) an iterative application of various graph reduction techniques (e.g., the linear programming (LP) reduction method in \cite{nemhauser1975vertex}) with a stopping rule based on the non-applicability of these techniques; and (\textit{ii}) an evolutionary algorithm. The ReduMIS algorithm initiates with a pool of independent sets and evolves them through multiple rounds. 
In each round, a selection procedure identifies favorable nodes by executing graph partitioning, which clusters the graph nodes into disjoint clusters and separators to enhance the solution. 
In contrast, our pCQO-MIS approach does {\em not} require such complex algorithmic operations (e.g., solution combination operation, community detection, and local search algorithms for solution improvement \cite{andrade2012fast}) as used in ReduMIS. More importantly, ReduMIS and ILP solvers scale with the number of nodes and the number of edges (which constraints their application on highly dense graphs), whereas pCQO-MIS only scales w.r.t. the number nodes, as demonstrated in our experimental results (Section~\ref{sec: exp scal}). 


{\bf 2) Data-Driven Learning-Based Solvers:}
Data-driven approaches for the MIS problem can be classified into SL and RL methods. These methods depend on neural networks trained to fit the distribution over (un)labeled training graphs. 

A notable SL method is proposed in \cite{li2018combinatorial}, which combines several components including graph reductions \cite{lamm2016finding}, Graph Convolutional Networks (GCN) \cite{defferrard2016convolutional}, guided tree search, and the solution improvement local search algorithm \cite{andrade2012fast}. 
The GCN is trained on SATLIB graphs using their solutions as ground truth labels, enabling the learning of probability maps for the inclusion of each vertex in the optimal solution. Then, a subset of ReduMIS subroutines is used to improve their solution. 
While the work in \cite{li2018combinatorial} reported on-par results to ReduMIS, it was later shown by \cite{bother2022s} that setting the GCN parameters to random values performs similarly to using the trained GCN network. 

Recently, DIFUSCO was introduced in \cite{sun2023difusco}, an approach that integrates Graph Neural Networks (GNNs) with diffusion models \cite{ho2020denoising} to create a graph-based diffusion denoiser. 
DIFUSCO formulates the MIS problem in the discrete domain and trains a diffusion model to improve a single or a pool of solutions. 

RL-based methods have achieved more success in solving the MIS problem when compared to SL methods. In \cite{dai2017learning}, a Deep Q-Network (DQN) is combined with graph embeddings, facilitating the discrimination of vertices based on their influence on the solution and ensuring scalability to larger instances. Meanwhile, the study presented in \cite{ahn2020learning} introduces the Learning What to Defer (LwD) method, an unsupervised deep RL solver resembling tree search, where vertices are iteratively assigned to the independent set. Their model is trained using Proximal Policy Optimization (PPO) \cite{schulman2017proximal}. 

The work in \cite{qiu2022dimes} introduces DIMES, which combines a compact continuous space to parameterize the distribution of potential solutions and a meta-learning framework to facilitate the effective initialization of model parameters during the fine-tuning stage that is required for each graph. 

It is worth noting that the majority of SL and RL methods are {\em data-dependent} in the sense that they \textcolor{black}{often} require the training of a separate network for each dataset of graphs. These data-dependent methods exhibit limited {\em generalization} performance when applied to out-of-distribution graph data. This weak generalization stems from the need to train a different network for each graph dataset (see columns 3 and 6 in Table~\ref{tbl:main_results}). An example of the weak generalization of DIFUSCO is given in Appendix~\ref{sec: appen OOD performance of DIFUSCO}. 
In contrast, our approach differs from SL- and RL-based methods in that it does not rely on any training datasets. Instead, our method utilizes a simple yet effective {\em graph-encoded} continuous objective function, which is defined solely in terms of the connectivity of a given graph.

{\bf 3) Dataless Differentiable Methods:} The method in \cite{alkhouri2022differentiable} introduced dataless neural networks (dNNs) tailored for the MIS problem. Notably, their method operates without the need for training data and relies on $n$ trainable parameters. Their proposed method uses a ReLU-based continuous objective to solve the MIS problem. However, for scaling up, graph partitioning and local search algorithms were employed. 

\textcolor{black}{The work in \cite{schuetz2022combinatorial} introduced Physics-Inspired Graph Neural Network (PI-GNN), a dataless approach for solving COPs that optimizes the parameters of a GNN over one graph using a continuous relaxation of \eqref{eqn: MIS QIP} with box-constraints. However, only $d$-regular graphs were used for evaluation. Multiple studies followed PI-GNN including the work in \cite{ichikawa2023controlling}.}




 {\bf 4) Discrete Sampling Solvers:} In recent studies, researchers have explored the integration of energy-based models with parallel implementations of simulated annealing to address combinatorial optimization problems \cite{goshvadi2024discs} without relying on any training data. For example, in tackling the MIS problem, the work in \cite{sun2023revisiting} proposed a solver that combines (\textit{i}) Path Auxiliary Sampling (PAS) \cite{sun2021path} and (\textit{ii}) the QUBO formulation in \eqref{eqn: MIS QIP}. However, unlike pCQO-MIS, these approaches entail prolonged sequential run-time and require fine-tuning of several hyperparameters. Moreover, the energy models utilized in this method for addressing the MIS problem may generate binary vectors that violate the ``no edges'' constraint inherent to the MIS problem. Consequently, a post-processing procedure becomes necessary.

\subsection{Requirements Comparison with Baselines}\label{sec: appen setting comp}

In Table~\ref{tbl: parameters comp}, we provide an overview comparison of the number of trainable parameters, hyper-parameters, and additional techniques needed for each baseline. ReduMIS depends on a large set of graph reductions (see Section 3.1 in \cite{lamm2016finding}) and graph clustering, which is used for solution improvement.

\begin{table*}[htp!]
\small
    \centering
    \resizebox{0.8\textwidth}{!}{\begin{tabular}{c|c|c|c}
    \toprule
 \textbf{Method} &  \textbf{Size} & \textbf{Hyper-Parameters} & \textbf{Additional Techniques/Procedures}  \\ 

\midrule

ReduMIS &  $n$ variables & N/A & Many graph reductions, and graph clustering   \\  
\midrule
 Gurobi & $n$ variables & N/A  & N/A   \\  
\midrule
 CP-SAT & $n$ variables & N/A  & N/A   \\  

\midrule
 GCN &  $\gg n$ trainable parameters & Many as it is learning-based & Tree Search   \\  
\midrule
 LwD &  $\gg n$ trainable parameters & Many as it is learning-based & Entropy Regularization   \\  
\midrule
  DIMES &  $\gg n$ trainable parameters & Many as it is learning-based &  Tree Search or Sampling Decoding   \\  
\midrule
       DIFUSCO &  $\gg n$ trainable parameters & Many as it is learning-based & Greedy Decoding or Sampling Decoding   \\  

\midrule
iSCO &  $n$ variables &  Temperature, Sampler, Chain length & Post Processing for Correction   \\  

\midrule
 pCQO-MIS &  $n$ trainable parameters& $\alpha$, $\beta$, $\gamma$, $\gamma'$, $T$, and $\eta$ & Degree-based Parallel Initializations \\  

    \bottomrule

    \end{tabular}}
    \vspace{-0.2cm}
     \caption{{Requirements comparison with baselines. For the ILPs (Gurobi and CP-SAT), trainable parameters correspond to $n$ binary decision variables. ReduMIS is not an optimization method. However, they use $n$ binary variables, one for each node.}}
    \label{tbl: parameters comp}
    \vspace{-0.2cm}
\end{table*}

For learning-based methods, although they attempt to `fit' a distribution over training graphs, the parameters of a neural network architecture are optimized during training. This architecture is typically much larger than the number of input coordinates ($\gg n$). For instance, the network used in DIFUSCO consists of 12 layers, each with 5 trainable weight matrices. Each weight matrix is of size $256 \times 256$, resulting in $3932160$ trainable parameters for the SATLIB dataset (which consists of graphs with at most 1347 nodes). Moreover, this dependence on training a NN introduces several hyper-parameters such as the number of layers, size of layers, choice of activation functions, etc.

It's important to note that the choice of the sampler in iSCO introduces additional hyper-parameters. For instance, the PAS sampler \cite{sun2021path} used in iSCO depends on the choice of the neighborhood function, a prior on the path length, and the choice of the probability of acceptance.




\section{Additional Experiments}\label{sec: appen additional results}


\subsection{Results using DIMACS Graphs} \label{sec: appen Results for DIMACS and BHOSLIB}


In this section, we evaluate our proposed algorithm using graph instances from the DIMACS dataset. These graph instances have known optimal solutions as listed in the recent MC survey paper \cite{marino2024short}. The DIMACS benchmark is part of the second DIMACS Implementation Challenge \cite{johnson1996cliques}, which focused on problems related to Clique, Satisfiability, and Graph Coloring. The benchmark contains a variety of graphs derived from coding theory, and fault diagnosis, among others.


As observed, we were able to solve 49 out of the 61 DIMACS graphs we tested within a 30-second time budget per graph, while ReduMIS was able to solve 58 in the same amount of time.

For our method, we use the following set of hyper-parameters: $\alpha = 0.01, \beta =0.3, \gamma = 500, \gamma'=1, \eta = 2.25, T = 500 $. \textcolor{black}{We emphasize that these graphs are very diverse (in terms of both order and density as indicated in columns 2 and 4) and using one set of hyper-parameters indicate that our method may not be highly sensitive in terms of finding feasible solutions. This also indicates that if we perform a per graph hyper-parameters tuning, our method has the potential of reporting improved results.}

\begin{longtable}{lccccccccc}
\toprule
\textbf{Graph Name} & $n$ & $m$ & Density & \textbf{Optimal} & \textbf{pCQO-MIS}  & \textbf{ReduMIS}  \\
\midrule
\endfirsthead
\toprule
\textbf{Graph Name} & $n$ & $m$ & Density & \textbf{Optimal} & \textbf{pCQO-MIS}  & \textbf{ReduMIS}  \\
\midrule
\endhead
\midrule
\multicolumn{9}{r}{\textit{Continued on next page}} \\
\midrule
\endfoot
\bottomrule
\endlastfoot
\texttt{c-fat500-1} & 500 & 120291 & 0.9600 & 14 & 14 & 14 \\
\texttt{c-fat500-2} & 500 & 115611 & 0.9267 & 26 & 26 & 26 \\
\texttt{c-fat200-1} & 200 & 18366 & 0.9229 & 12 & 12  & 12 \\
\texttt{c-fat200-2} & 200 & 16665 & 0.8374 & 24 & 24  & 24 \\
\texttt{c-fat500-5} & 500 & 101559 & 0.8141 & 64 & 64  & 64 \\
\texttt{p\_hat300-1} & 300 & 33917 & 0.7562 & 8 & 8  & 8 \\
\texttt{p\_hat1000-1} & 1000 & 377247 & 0.7552 & 10 & 10 & 10\\
\texttt{p\_hat700-1} & 700 & 183651 & 0.7507 & 11 & 11  & 11 \\
\texttt{p\_hat500-1} & 500 & 93181 & 0.7469 & 9 & 9 & 9 \\
\texttt{p\_hat1500-1} & 1500 & 839327 & 0.7466 & 12 & \textbf{11}  & 12 \\
\texttt{hamming6-4} & 64 & 1312 & 0.6508 & 4 & 4 &  4 \\
\texttt{c-fat500-10} & 500 & 78123 & 0.6262 & 126 & 126  & 126 \\
\texttt{c-fat200-5} & 200 & 11427 & 0.5742 & 58 & 58 &  58 \\
\texttt{p\_hat300-2} & 300 & 22922 & 0.5111 & 25 & 25 &  25 \\
\texttt{p\_hat1000-2} & 1000 & 254701 & 0.5099 & 46 & 46 & 46 \\
\texttt{brock200\_2} & 200 & 10024 & 0.5037 & 12 & \textbf{11}  & 12 \\
\texttt{p\_hat700-2} & 700 & 122922 & 0.5024 & 44 & 44 &  44 \\
\texttt{DSJC1000\_5} & 1000 & 249674 & 0.4998 & 15 & 15 &  15 \\
\texttt{C2000.5} & 2000 & 999164 & 0.4998 & 16 & \textbf{15}  & 16\\
\texttt{sanr400\_0.5} & 400 & 39816 & 0.4989 & 13 & 13  & 13 \\
\texttt{DSJC500\_5} & 500 & 62126 & 0.4980 & 13 & 13  & 13\\
\texttt{p\_hat500-2} & 500 & 61804 & 0.4954 & 36 & 36  & 36 \\
\texttt{p\_hat1500-2} & 1500 & 555290 & 0.4939 & 65 & 65  & 65 \\
\texttt{johnson8-2-4} & 28 & 168 & 0.4444 & 4 & 4 &  4 \\
\texttt{brock200\_3} & 200 & 7852 & 0.3946 & 15 & \textbf{14}  & 15 \\
\texttt{hamming8-4} & 256 & 11776 & 0.3608 & 16 & 16  & 16 \\
\texttt{keller4} & 171 & 5100 & 0.3509 & 11 & 11 &  11 \\
\texttt{brock800\_1} & 800 & 112095 & 0.3507 & 23 & \textbf{20}  & \textbf{21}\\
\texttt{brock200\_4} & 200 & 6811 & 0.3423 & 17 & 16  & 17 \\
\texttt{sanr200\_0.7} & 200 & 6032 & 0.3031 & 18 & 18  & 18 \\
\texttt{san200\_0.7\_1} & 200 & 5970 & 0.3000 & 30 & 30  & 30 \\
\texttt{sanr400\_0.7} & 400 & 23931 & 0.2999 & 21 & 21 &  21 \\
\texttt{p\_hat1000-3} & 1000 & 127754 & 0.2558 & 68 & \textbf{67} & 68 \\
\texttt{p\_hat300-3} & 300 & 11460 & 0.2555 & 36 & 36 &  36 \\
\texttt{brock200\_1} & 200 & 5066 & 0.2546 & 21 & 21 &  21 \\
\texttt{p\_hat700-3} & 700 & 61640 & 0.2520 & 62 & 62 &  62 \\
\texttt{brock400\_1} & 400 & 20077 & 0.2516 & 27 & \textbf{25}  & \textbf{25}\\
\texttt{p\_hat500-3} & 500 & 30950 & 0.2481 & 50 & 50  & 50 \\
\texttt{p\_hat1500-3} & 1500 & 277006 & 0.2464 & 94 & \textbf{93}  & 94\\
\texttt{johnson16-2-4} & 120 & 1680 & 0.2353 & 8 & 8  & 8 \\
\texttt{johnson8-4-4} & 70 & 560 & 0.2319 & 14 & 14  & 14\\
\texttt{hamming10-4} & 1024 & 89600 & 0.1711 & 40 & 40  & 40\\
\texttt{johnson32-2-4} & 496 & 14880 & 0.1212 & 16 & 16  & 16 \\
\texttt{sanr200\_0.9} & 200 & 2037 & 0.1024 & 42 & 42  & 42 \\
\texttt{C125.9} & 125 & 787 & 0.1015 & 34 & 34 &  34 \\
\texttt{C250.9} & 250 & 3141 & 0.1009 & 44 & 44 &  44\\
\texttt{gen400\_p0.9\_75} & 400 & 7980 & 0.1000 & 75 & 75 &  75 \\
\texttt{gen400\_p0.9\_55} & 400 & 7980 & 0.1000 & 55 & \textbf{52}  & 55\\
\texttt{gen200\_p0.9\_44} & 200 & 1990 & 0.1000 & 44 & \textbf{42}  & 44 \\
\texttt{gen200\_p0.9\_55} & 200 & 1990 & 0.1000 & 55 & 55  & 55 \\
\texttt{san200\_0.9\_2} & 200 & 1990 & 0.1000 & 60 & 60  & 60 \\
\texttt{san400\_0.9\_1} & 400 & 7980 & 0.1000 & 100 & 100  & 100 \\
\texttt{san200\_0.9\_1} & 200 & 1990 & 0.1000 & 70 & 70  & 70 \\
\texttt{san200\_0.9\_3} & 200 & 1990 & 0.1000 & 44 & 44  & 44 \\
\texttt{gen400\_p0.9\_65} & 400 & 7980 & 0.1000 & 65 & 65  & 65 \\
\texttt{C500.9} & 500 & 12418 & 0.0995 & 57 & \textbf{56}  & 57 \\
\texttt{C1000.9} & 1000 & 49421 & 0.0989 & 68 & \textbf{65}  & \textbf{67} \\
\texttt{hamming6-2} & 64 & 192 & 0.0952 & 32 & 32  & 32 \\
\texttt{MANN\_a9} & 45 & 72 & 0.0727 & 16 & 16 & 16 \\
\texttt{hamming8-2} & 256 & 1024 & 0.0314 & 128 & 128  & 128\\
\texttt{hamming10-2} & 1024 & 5120 & 0.0098 & 512 & 512  & 512 \\

\bottomrule

\caption{{Performance of pCQO-MIS on the DIMACS graphs dataset as compared to the known optimal solution (column 5) and SOTA heuristic ReduMIS (column 7). Graphs are ordered based on the graph density $\frac{2m}{n(n-1)}$ (column 4). For our method, the time limit is 30 seconds per graph. Bold results indicate the cases where pCQO-MIS or ReduMIS don't achieve the optimal.}}
\end{longtable}

\subsection{\textcolor{black}{Results of Large Random ER Graphs Under Time Constraints}}\label{sec: appen results from Cunxi's team}

In this subsection, we compare our method with Gurobi and ReduMIS using 10 ER graphs with $n=3000$ and $p=\{0.1, 0.2, 0.3, 0.4, 0.5, 0.6, 0.7\}$ with time budget of 30 seconds using the following machine: CPU Intel(R) Xeon(R) Gold 6418H and GPU NVIDIA RTX A6000. 

\begin{table*}[htp!]
\small
    \centering
    \resizebox{0.7\textwidth}{!}{\begin{tabular}{c|ccccccc}
    \toprule
   & \multicolumn{7}{c}{\textbf{Average MIS Size} at different $p$}  \\ 
 \textbf{Method} &  $p=0.1$ & $p=0.2$ &$p=0.3$ &$p=0.4$ &$p=0.5$ &$p=0.6$ &$p=0.7$ \\
\toprule
 ReduMIS & 61.5 & -- &-- &-- &-- &-- &--  \\ 
 \midrule
 Gurobi & 55.6 & 29.1 & 20.3 & 14.3 & 10.8 & 8.8 & 7.1  \\ 
 \midrule
 \textbf{pCQO-MIS} (Ours) & \textbf{76.5} & \textbf{39.8} & \textbf{25.2} & \textbf{18.5} & \textbf{14.3} & \textbf{11.5} & \textbf{9}  \\ 
    \bottomrule
    \end{tabular}}
    \vspace{-0.2cm}
     \caption{{Evaluation of pCQO-MIS vs. the ReduMIS and Gurobi with a time budget of 30 seconds using ER graphs with $n=3000$ and different probability of edge creation, i.e., $p$. This $p$ approximately indicates the density in the graph. This means that the number of edges is from 449850 (for $p=0.1$) to 3148950 (for $p=0.7$). }}
    \label{tab: comparison with redumis and ILP with large ER Cunxi}
    \vspace{-0.2cm}
\end{table*}

Results are given in Table~\ref{tab: comparison with redumis and ILP with large ER Cunxi}. As observed, under a time budget of 30 seconds, out method outperforms the ILP solver and ReduMIS. The hyper-parameters tuning was done using one graph for every $p$ and as recommended in Appendix~\ref{sec: appen hyper parameter tuning procedure}. For example, for $p=0.1$, we used one graph out of the 10 and performed the quick grid search, then used the parameters for the remaining 9. 

\subsection{\textcolor{black}{Results of BA Graphs from \cite{wu2025unrealized}}}\label{sec: appen results BA}

In this subsection, we report results on Barabási–Albert (BA) graphs corresponding to those used in Table~2 of \cite{wu2025unrealized}. These graphs were generated using the NetworkX library and are parameterized by $n$ and $q$, where $n$ denotes the number of nodes and $q$ denotes the number of edges attached from a new node to existing nodes \cite{barabasi1999emergence}. We evaluate graphs with $n \in \{100, 300, 1000\}$. The results from ReduMIS and OnlineMIS are taken directly from \cite{wu2025unrealized}, \textbf{while results for our method were obtained using a total runtime of 33.7 minutes}, which includes hyperparameter tuning. \textbf{This differs from \cite{wu2025unrealized}'s setup, which used a time limit of up to 96 hours.} For additional details on their experiment setup and hardware, see the caption of Table~2 in \cite{wu2025unrealized}.

Results are presented in Table~\ref{tab: comparison with redumis on BA graphs}. We match ReduMIS exactly in six cases, and report a close result in another. However, in two cases, our method underperforms with a difference of more than three nodes.

\begin{table*}[htp!]
\small
    \centering
    \resizebox{0.66\textwidth}{!}{\begin{tabular}{c|c|cc|c}
    \toprule
 $n$ & $q$ & OnlineMIS & ReduMIS & \textbf{Average MIS Size} \textbf{pCQO-MIS} (Ours)\\ 
\toprule
 100 & 5 & 39.5 & 39.5 & 39.5  \\ 
 100 & 15 & 21.63 & 21.63 & 21.63  \\ 
 \midrule
 300 & 5 & 123.13 & 123.13 & 123.13  \\ 
 300 & 15 & 71.38 & 71.38 & 71.38  \\ 
 300 & 50 & 49.88 & 50 & 50  \\ 
 \midrule
 1000 & 5 & 417.13 &417.13 &416.625  \\ 
 1000 & 15 & 245 &  246.38 & 241.875  \\ 
 1000 & 50 & 115.75&116.88 & 111.75  \\ 
  1000 & 150 & 150 & 150 & 150  \\ 

     \bottomrule
    \end{tabular}}
    \vspace{-0.2cm}
     \caption{{Evaluation of pCQO-MIS vs. ReduMIS using a set of the BA graphs in \cite{wu2025unrealized}. OnlineMIS is an accelerated version of ReduMIS, where a fewer number of graphs reductions are used after performing local search. Results of ReduMIS and OnlineMIS are as reported in \cite{wu2025unrealized}.}}
    \label{tab: comparison with redumis on BA graphs}
    \vspace{-0.2cm}
\end{table*}
In our experiment, we used $\gamma'=1$, $T=250$, and $\beta = 0.9$. For the remaining hyperparameters, we perform a grid search over $\alpha \in \{0.01, 0.001, 0.0001, 0.00001\}$ and $\gamma \in \{100, 200, 500, 750\}$ for each graph and report the best result.



\subsection{\textcolor{black}{Impact of the Adopted Momentum-based Gradient Descent Optimizer}}\label{sec: appen impact of MGD vs. GD}

Extremal stationary points may be rare and depend of the graph connectivity as was discussed in Appendix~\ref{sec: appen saddle point}. However, our use of MGD is not solely motivated by the need to escape these unwanted points when they exist. It is also driven by the empirical observation that, when starting from the same initial point, MGD converges to minimizers with larger MaxIS values while avoiding the overshooting observed with vanilla GD. Also, momentum is generally used to accelerate convergence of GD. 
To support our claim that MGD converges to better minima compared to GD, we conduct the following experiment: We use $5$ ER graphs with $n=100$ and $p\in \{0.3,0.6\}$ (probability of edge creation) and run GD vs. MGD, using the same $\gamma, \gamma', \alpha$ and the initializations. Table~\ref{tbl: mgd vs gd} shows the results. As observed, on average, MGD converges to larger MIS. Furthermore, MGD avoids the all 0's which is the case of overshooting in GD.

\begin{table*}[htp!]
\small
    \centering
    \resizebox{0.64\textwidth}{!}{\begin{tabular}{c|cc|cc}
    \toprule
    & \multicolumn{2}{c}{$p=0.3$} & \multicolumn{2}{c}{$p=0.6$} \\
    \textbf{Step size $\alpha$} & \textbf{GD Avg. MIS} & \textbf{MGD Avg. MIS} & \textbf{GD Avg. MIS } & \textbf{MGD Avg. MIS} \\
    \midrule
    0.0001 & 11.2 & 12.9 & 6.7 & 6.9 \\
    0.0002 & 11.7 & 12.8 & 0.0 & 6.3 \\
    \bottomrule
    \end{tabular}}
    \vspace{-0.2cm}
    \caption{{Comparison of average MIS sizes for different step sizes $\alpha$ using GD and MGD across different densities (as indicated by $p=0.3$ and $p=0.6$). For both cases, we use $\gamma=n$ and $\gamma'=1$. }}
    \label{tbl: mgd vs gd}
    \vspace{-0.2cm}
\end{table*}

\subsection{\textcolor{black}{Ablation Study on the Clique Term in pCQO-MIS}}\label{sec: appen impact of clique}

In pCQO-MIS, the clique term is introduced to (\textit{i}) encourage the optimizer to select two nodes connected by an edge in the complement graph, leveraging the duality between the clique and MIS problems, and (\textit{ii}) to discourage sparsity in the solution given the $\ell_1$ norm in \eqref{eqn: MIS CQO main matrix with norms}. This is our motivation and intuition.

The improvements are observed empirically in terms of enhancing stability, preventing overshooting, and leading to better minima.

Tables \ref{tbl: mis_clique avg mis} and \ref{tbl: steps_clique} compare the cases with and without the clique term (i.e., $\gamma' = 7$ vs. $\gamma' = 0$) over the ER dataset used in Table~\ref{tbl:main_results}. The results are presented as the average MIS size (Table \ref{tbl: mis_clique avg mis}) and the number of steps for first solution (Table \ref{tbl: steps_clique}) in the format "without–with" and are obtained across different values of $\alpha$ (step size) and $\gamma$ (edge penalty parameter). The results are reported after optimizing $50$ batches of initializations for each unique set of hyper-parameters. We note that the range of $\gamma$ is selected based on the criterion in Theorem~\ref{th: all fixed are binary and MIS}. 

\begin{table*}[htp!]
\small
    \centering
    \resizebox{0.6\textwidth}{!}{\begin{tabular}{c|c|c|c|c}
    \toprule
    \textbf{Step size $\alpha$} & $\gamma = 350.0$ & $\gamma = 450.0$ & $\gamma = 525.0$ & $\gamma = 600.0$ \\
    \midrule
    4e-06  & 0.0 - 39.07   & 0.0 - 39.18   & 0.0 - 40.27   & 0.0 - \textbf{42.59} \\
    9e-06  & 0.0 - \textbf{44.51} & 0.0 - \textbf{44.21} & 0.0 - \textbf{43.87} & 0.0 - \textbf{43.57} \\
    4e-05  & 0.0 - \textbf{41.52} & 0.0 - \textbf{41.21} & 0.0 - \textbf{40.99} & 0.0 - \textbf{40.69} \\
    9e-05  & 0.0 - \textbf{40.70} & 0.0 - \textbf{40.61} & 0.0 - \textbf{40.54} & 0.0 - \textbf{40.60} \\
    0.0004 & 40.50 - \textbf{41.14} & 40.50 - \textbf{41.00} & 40.35 - \textbf{40.94} & 40.39 - \textbf{40.71} \\
    0.0009 & 40.34 - \textbf{41.24} & 40.42 - \textbf{41.16} & 40.44 - \textbf{40.91} & 40.45 - \textbf{40.77} \\
    0.004  & 40.28 - \textbf{41.11} & 40.45 - \textbf{40.93} & 40.38 - \textbf{40.86} & 40.39 - \textbf{40.79} \\
    0.009  & 40.41 - \textbf{41.17} & \underline{40.55} - 40.50         & 40.35 - \textbf{40.75} & 40.27 - \textbf{40.85} \\
    0.04   & 40.39 - \textbf{41.20} & 40.40 - \textbf{40.89} & 40.35 - \textbf{40.93} & 40.44 - \textbf{40.78} \\
    0.09   & 40.35 - \textbf{41.17} & 40.32 - \textbf{40.95} & 40.40 - \textbf{40.90} & 40.46 - \textbf{40.84} \\
    0.4    & 40.42 - \textbf{41.22} & 40.32 - \textbf{40.98} & 40.43 - \textbf{40.90} & 40.29 - \textbf{40.82} \\
    0.9    & 40.39 - \textbf{41.15} & 40.34 - \textbf{41.03} & 40.21 - \textbf{40.86} & 40.10 - \textbf{40.60} \\
    \bottomrule
    \end{tabular}}
    \vspace{-0.2cm}
    \caption{{Average MaxIS size in the format ``without–with'' the clique term, across different values of step size $\alpha$ and edge penalty parameter $\gamma$. Bold results correspond to the cases where pCQO-MIS obtained better results than the best of pQO (underlined).}}
    \label{tbl: mis_clique avg mis}
    \vspace{-0.2cm}
\end{table*}

The following are the key observations from Table~\ref{tbl: mis_clique avg mis} for which the bold results correspond to cases where using the clique term resulted in a MaxIS size higher than the best of the "without" case (underlined):

\begin{enumerate}
    \item As observed, the difference between best pCQO (with) and the best pQO (without) is nearly 4 nodes which is similar to what we report in Table\ref{tbl:main_results}, nearly 4.1 nodes on average. 
    \item When $\gamma' = 7$, our approach returns better results across learning rates and $\gamma$'s compared to $\gamma' = 0$. Additionally, $\gamma' = 0$ is not competitive compared with any of the baseline solvers we tested in this paper, as it achieves at most 40.55 (the underlined result in the table). Only when the clique term is introduced does our method become competitive with other solvers.
    \item Out of all combinations above, there are only two cases where $\gamma'=0$ is slightly better.
\end{enumerate}

In addition to average MaxIS size, we evaluated how many optimizer steps were required to obtain the first MaxIS solution for each set of parameters. The results are reported as the average time to first solution over the ER dataset in Table~\ref{tbl: steps_clique}. In all cases, $\gamma' = 7$ finds a viable solution first. We conjecture that, due to the presence of the third clique term, a "smoother" optimization landscape is created for each of the evaluated hyperparameter sets.
\begin{table*}[htp!]
\small
    \centering
    \resizebox{0.6\textwidth}{!}{\begin{tabular}{c|c|c|c|c}
    \toprule
    \textbf{Step size $\alpha$} & $\gamma = 350.0$ & $\gamma = 450.0$ & $\gamma = 525.0$ & $\gamma = 600.0$ \\
    \midrule
    4e-06  & N/A - 425.00 & N/A - 442.00 & N/A - 436.45 & N/A - \textbf{434.79} \\
    9e-06  & N/A - \textbf{261.44} & N/A - \textbf{244.73} & N/A - \textbf{234.52} & N/A - \textbf{222.99} \\
    4e-05  & N/A - \textbf{73.29}  & N/A - \textbf{71.94}  & N/A - \textbf{71.35}  & N/A - \textbf{70.45} \\
    9e-05  & N/A - \textbf{46.26}  & N/A - \textbf{47.54}  & N/A - \textbf{48.34}  & N/A - \textbf{48.91} \\
    0.0004 & 206.95 - \textbf{29.82} & 208.50 - \textbf{31.48} & 209.69 - \textbf{32.80} & 210.86 - \textbf{33.79} \\
    0.0009 & 134.53 - \textbf{26.63} & 136.81 - \textbf{28.56} & 138.15 - \textbf{29.88} & 139.29 - \textbf{30.82} \\
    0.004  & 89.53 - \textbf{24.00}  & 91.95 - \textbf{26.09}  & 93.42 - \textbf{27.48}  & 94.66 - \textbf{28.65} \\
    0.009  & 81.19 - \textbf{23.41}  & 83.63 - 25.62  & 85.07 - \textbf{27.04}  & 86.31 - \textbf{28.38} \\
    0.04   & 74.05 - \textbf{23.31}  & 76.39 - \textbf{25.46}  & 77.86 - \textbf{26.92}  & 79.20 - \textbf{28.11} \\
    0.09   & 72.23 - \textbf{23.38}  & 74.66 - \textbf{25.49}  & 76.05 - \textbf{26.98}  & 77.31 - \textbf{28.14} \\
    0.4    & 70.51 - \textbf{23.33}  & 72.91 - \textbf{25.43}  & 74.32 - \textbf{26.90}  & 75.55 - \textbf{28.16} \\
    0.9    & 69.92 - \textbf{23.34}  & 72.27 - \textbf{25.42}  & 73.71 - \textbf{26.92}  & 74.99 - \textbf{28.15} \\
    \bottomrule
    \end{tabular}}
    \vspace{-0.2cm}
    \caption{{Average number of steps to converge in the format ``without–with'' the clique term, for various step sizes $\alpha$ and edge penalties $\gamma$. Bold results follow Table~\ref{tbl: mis_clique avg mis}.}}
    \label{tbl: steps_clique}
    \vspace{-0.2cm}
\end{table*}

We note that the above results indicate that there might exist a set of hyper-parameters with no MC term that result in a better MaxIS when compared to using the MC term. However, from our experiments, we only obtain the competitive results with baselines when the MC term is included.


\subsection{\textcolor{black}{Comparison with a Clique Heuristic Solver}}
\label{sec: appen comparison with clique solver}

In this subsection, we include comparison results of 31 graphs (from DIMACS dataset) with an efficient clique heuristic solver called the Minimal Independent Set based Approach (MISB) \cite{7103958}, which demonstrated competitive performance on these graphs with $n\leq 500$. 

Table~\ref{tab: comparison with cliq MISB} shows the results of 5 graphs as examples. The complete table can be found online\footnote{\tiny{\url{https://github.com/ledenmat/pCQO-mis-benchmark/blob/main/Comparison_with_MSIB_MC_Solver.pdf}}}. The results of MISB is sourced from Table 1 of \cite{7103958}. It can be seen that our algorithm consistently outperforms MISB and achieves optimal or near-optimal solution. Here, $\rho$ is the graph density. 

\begin{table*}[htp!]
\small
    \centering
    \resizebox{0.8\textwidth}{!}{\begin{tabular}{c|c|c|c|c|c|c}
    \toprule
    \textbf{Graph Name} & $n$ & $m$ & Density & \textbf{Optimal} & \textbf{Ours} & \textbf{MISB MC Solver}~\cite{7103958} \\
    \midrule
    \texttt{cc-fat500-2}     & 500 & 115611 & 0.92 & 26 & \textbf{26} & 26 \\
    \texttt{p\_hat300-2}     & 300 & 22922  & 0.51 & 25 & \textbf{25} & 24 \\
    \texttt{sanr200\_0.7}    & 200 & 6032   & 0.30 & 18 & \textbf{18} & 16 \\
    \texttt{brock400\_1}     & 400 & 20077  & 0.25 & 27 & \textbf{25} & 23 \\
    \texttt{gen200\_p0.9\_55} & 200 & 1990   & 0.10 & 55 & \textbf{55} & 49 \\
    \bottomrule
    \end{tabular}}
    \vspace{-0.2cm}
    \caption{{Comparison between pCQO-MIS and MISB clique solver.}}
    \label{tab: comparison with cliq MISB}
    \vspace{-0.2cm}
\end{table*}

We note that in the recent survey paper about clique solvers \cite{marino2024short}, the authors recognized ReduMIS \cite{lamm2016finding} (the main heuristic we compare with in our paper) as ``extremely effective'' for solving the clique problem (see Section 3.3.1) when compared to other methods. 


\subsection{Comparison with the Relu-based Dataless Solver}
\label{sec: appen comparison with dNNs}


Here, we compare pCQO-MIS with the dataless Neural Network (dNN) MIS solver in \cite{alkhouri2022differentiable}. In this experiment, we use 10 GNM graphs with $(n,m) = (100,500)$ and report the average MIS size and average run-time (in seconds) \textcolor{black}{for solving one initialization}. The results are given in Table~\ref{tab: comparison with MIS dNN}. As observed, pCQO-MIS outperforms the dNN-MIS method in \cite{alkhouri2022differentiable} in terms of both the run-time and MIS size. 


\begin{table*}[htp!]
\small
    \centering
    \resizebox{0.6\textwidth}{!}{\begin{tabular}{c|c|c}
    \toprule
 \textbf{Method} &  \textbf{Average MIS Size} & \textbf{Average Run-Time (seconds)} \\ 

\midrule

 dNN-MIS \cite{alkhouri2022differentiable} & 27.4 & 24 \\ 
\midrule  
\textbf{pCQO-MIS} (Ours) & 29.9 & 0.7\\

    \bottomrule

    \end{tabular}}
    \vspace{-0.2cm}
     \caption{{Evaluation of pCQO-MIS vs. the MIS dNN solver in \cite{alkhouri2022differentiable} in terms of MIS size and run-time (seconds) over 10 GNM graphs with $(n,m) = (100,500)$.}}
    \label{tab: comparison with MIS dNN}
    \vspace{-0.2cm}
\end{table*}
%


\subsection{Comparison with Leading data-centric Solver with Different Densities}
\label{sec: appen OOD performance of DIFUSCO}

In this subsection, we compare our approach with the leading data-driven baseline, DIFUSCO. DIFUSCO uses a pre-trained diffusion model trained on ER700-800 graphs (with $p=0.15$) labeled using ReduMIS. 

Here, we compare pCQO-MIS to DIFUSCO using graphs (with $n=700$) with varying edge creation probabilities, $p$. The results, presented in Table~\ref{tab: OOD comparison DIFUSCO}, are averaged over 32 graphs for each $p$, with DIFUSCO utilizing 4-sample decoding. For pCQO-MIS, hyperparameters remain fixed across all values of $p$.

\begin{table*}[htp!]
\small
\centering
\resizebox{0.74\textwidth}{!}{%
\begin{tabular}{c|cc|cc}
\toprule
\multirow{2}{*}{\textbf{ \textbf{Probability of Edge Creation} $p$ }} & \multicolumn{2}{c}{\textbf{DIFUSCO}} \cite{sun2023difusco} & \multicolumn{2}{c}{\textbf{pCQO-MIS (Ours)}} \\
 & Avg. MIS Size ($\uparrow$) & Run-time ($\downarrow$) & Avg. MIS Size ($\uparrow$) & Run-time ($\downarrow$) \\
\midrule
 
 0.05 & 88.25 & 4.62 & 97.34 & 4.73 \\ 
 \midrule
0.10 & 58 & 8.63 & 59.25 & 4.71 \\ 
  \midrule
0.15 (Training setting of DIFUSCO) & 40.81 & 12.98 & 43.2 & 4.67 \\ 
  \midrule
0.2  & 29.22 & 17.66 & 33.78 & 4.45 \\ 
\bottomrule
\end{tabular}%
}
\vspace{-0.2cm}
\caption{{Evaluation of pCQO-MIS vs. the ER700-trained DIFUSCO (with $p=0.15$) in \cite{sun2023difusco} in terms of average MIS size and sequential run-time (minutes) over 32 ER graphs for each $p$.}}
\label{tab: OOD comparison DIFUSCO}
\vspace{-0.2cm}
\end{table*}



As observed, our method consistently outperforms DIFUSCO in both average MIS size and run-time. Notably, our run-time remains constant as the number of edges increases, supporting our claim that the run-time scales only with the number of nodes in the graph. DIFUSCO reports relatively smaller MIS sizes, particularly for $p=0.05$ and $p=0.2$, which are slightly different from the training graphs. This underscores the generalization limitations of a leading learning-based method.

\subsection{pCQO-MIS Hyper-Parameters}
\label{sec: appen impl det}

In this subsection, we outline the pCQO-MIS parameters (i.e., $\gamma$, $\gamma'$, $\alpha$, $\beta$, $T$, and $\eta$) used in the paper, along with examples from the \textcolor{black}{tuning procedure} conducted to select these parameters. 


Table~\ref{tbl: hyper-parameters of pCQI-MIS} provides the specific parameter values used for Table~\ref{tbl:main_results} and Figure~\ref{fig:figure_and_table scalability} in Section~\ref{sec: exp}. These hyper-parameters are selected based on \textcolor{black}{a grid search} as those provided in Table~\ref{tab: ablation alpha beta} and Table~\ref{tab: ablation gamma gamma prime} for the ER dataset. The captions of these tables provide the parameters we fix and the parameters we vary, and in both cases, we report the average MIS size of 6 ER graphs. Other than the first three columns of the last row of Table~\ref{tab: ablation alpha beta}, the reported average MIS size (in both tables) vary between 37.67 and 41.83. \textcolor{black}{This indicates that pCQO-MIS results do not vary significantly with the choice of these parameters in term of finding feasible solutions}. 

\begin{table*}[htp!]
\small
    \centering
    \resizebox{0.92\textwidth}{!}{\begin{tabular}{c | c c c c c c c}
    \toprule
  \textbf{Graph Dataset} &Edges-penalty $\gamma$ & MC parameter $\gamma'$ & Step size $\alpha$ & Momentum $\beta$ & Steps $T$ & Exploration parameter $\eta$   \\ 
\midrule
 
 SATLIB &  900 & 1 & $3e-4$ & 0.875 & 30 & 2.25   \\  
 \midrule
 ER &  350 & 7 & $9e-6$ & 0.9 & 450 & 2.25   \\  
 \midrule
 GNM with $n\in \{50,500, 1000\}$ &  100 & 5 & $1e-2$ & 0.55 & 200 & 1   \\  
 \midrule
 GNM with $n\in \{1500, 2000\}$ &  100 & 10 & $1e-2$ & 0.55 & 200 & 1   \\  
 
    \bottomrule
    \end{tabular}}
    \vspace{-0.2cm}
     \caption{{Hyper-parameters for pCQO-MIS used in Section~\ref{sec: exp}. This selection is made based on ablation studies such as those in Table~\ref{tab: ablation alpha beta} and Table~\ref{tab: ablation gamma gamma prime}.}}
    \label{tbl: hyper-parameters of pCQI-MIS}
    \vspace{-0.2cm}
\end{table*}

\begin{table*}[htp!]
\small
    \centering
    \resizebox{0.41\textwidth}{!}{\begin{tabular}{c|c|c|c|c}
    \toprule
    Step Size $\alpha$ & $\beta = 0.1$ & $\beta = 0.5$ & $\beta = 0.7$ & $\beta = 0.9$ \\
    \midrule
    $1\text{e}{-2}$ & 41.83 & 38.83 & 38.17 & 39.83 \\
    $5\text{e}{-3}$ & 42.00 & 38.83 & 37.50 & 40.17 \\
    $1\text{e}{-3}$ & 41.17 & 38.67 & 38.17 & 39.67 \\
    $5\text{e}{-4}$ & 40.83 & 39.00 & 38.67 & 40.00 \\
    $1\text{e}{-4}$ & 37.67 & 41.17 & 39.67 & 41.00 \\
    $5\text{e}{-5}$ & 38.33 & 41.50 & 40.50 & 40.00 \\
    $1\text{e}{-5}$ &  5.67 & 35.33 & 17.83 & 40.67 \\
    \bottomrule
    \end{tabular}}
    \vspace{-0.2cm}
    \caption{{Average MIS size of 6 ER graphs for different values of $\alpha$ and $\beta$. Here, $\gamma = 300$, $\gamma' = 1$, and $T=300$. The initialization of $\mathbf{x}[0]$ is $\mathbf{h}$ in Eq.~\eqref{eqn: deg based ini}.}}
    \label{tab: ablation alpha beta}
    \vspace{-0.2cm}
\end{table*}



\begin{table*}[htp!]
\small
    \centering
    \resizebox{0.41\textwidth}{!}{\begin{tabular}{c|c|c}
    \toprule
    Edges Penalty $\gamma$ & MC Term $\gamma'$ & pCQO-MIS (MaxIS Size) \\
    \midrule
    300 & 1 & 40.67 \\
    300 & 5 & 40.16 \\
    500 & 1 & 39.83 \\
    500 & 5 & 40.33 \\
    775 & 1 & 39.33 \\
    775 & 5 & 39.67 \\
    \bottomrule
    \end{tabular}}
    \vspace{-0.2cm}
    \caption{{Average MaxIS size of 6 ER graphs using different values of $\gamma$ and $\gamma'$. Here, $\alpha = 1\text{e}{-5}$, $\beta = 0.9$, and $T=300$. The initialization of $\mathbf{x}[0]$ is $\mathbf{h}$ in Eq.~\eqref{eqn: deg based ini}.}}
    \label{tab: ablation gamma gamma prime}
    \vspace{-0.2cm}
\end{table*}

\subsubsection{\textcolor{black}{A basic warm start procedure for hyper-parameter tuning}} \label{sec: appen hyper parameter tuning procedure}


Here, we describe the procedure the grid search we used for our hyper-parameter tuning. We first perform a grid search over the following parameters using $T=750$. We use $\alpha \in \{0.5, 0.05, 0.005, 0.0005, 0.0001, 0.00001, 0.000001, 0.0000001\}$, $\beta \in \{0.99, 0.9, 0.75\}$, $\gamma\in \{250, 500, 1000, 2000, 5000\}$, and $\gamma'\in \{1, 3, 5\}$. Based on the results of these combinations, we use a second loop that takes all the best choices from the grid search and reduces $T$ until it impacts solution size. This procedure takes approximately 30 seconds to run on one ER graph. 

To run our method on new graphs, we recommend using the parameters tuning grid search. 




\subsection{Results of Table~\ref{tbl:main_results} based on the Number of Batches} \label{sec: appen main results with batches}

In this subsection, we provide the main pCQO-MIS results based on the number of batches. Table~\ref{tab: SATLIB results with batches} (resp. Table~\ref{tab: ER results with batches}) presents the results for the SATLIB (resp. ER) dataset. The results of Table~\ref{tbl:main_results} are obtained from these tables. 



\begin{table}[htp!]
\centering
\resizebox{0.45\textwidth}{!}{\begin{tabular}{||c|c|c||}
\hline
Batches Solved & pCQO-MIS (MIS Size) & pCQO-MIS (Run time) \\
\hline\hline
1 & 408.286 & 0.408\\
10 & 417.228 & 2.454\\
20 & 420.276 & 4.726\\
30 & 421.610 & 6.996\\
40 & 422.456 & 9.265\\
50 & 422.988 & 11.533\\
60 & 423.400 & 13.799\\
70 & 423.706 & 16.065\\
80 & 423.930 & 18.329\\
90 & 424.096 & 20.593\\
100 & 424.278 & 22.856\\
110 & 424.406 & 25.119\\
120 & 424.508 & 27.380\\
130 & 424.606 & 29.641\\
140 & 424.686 & 31.901\\
150 & 424.736 & 34.161\\
160 & 424.798 & 36.419\\
170 & 424.856 & 38.678\\
180 & 424.906 & 40.935\\
190 & 424.950 & 43.191\\
200 & 425.006 & 45.448\\
210 & 425.032 & 47.704\\
220 & 425.064 & 49.959\\
230 & 425.098 & 52.214\\
240 & 425.126 & 54.468\\
250 & 425.148 & 56.722\\
\hline
\end{tabular}}
\vspace{-0.2cm}
\caption{{pCQO-MIS SATLIB results (average MIS size and total run time in minutes) including the number of batches used (column 1).}}

\label{tab: SATLIB results with batches}
\end{table}

\begin{table}[htp!]
\centering
\resizebox{0.45\textwidth}{!}{\begin{tabular}{||c|c|c||}
\hline
Batches Solved & pCQO-MIS (MIS Size) & pCQO-MIS (Run time) \\
\hline
1 & 39.344 & 0.153\\
10 & 43.086 & 1.159\\
20 & 43.836 & 2.266\\
30 & 44.117 & 3.367\\
40 & 44.367 & 4.466\\
50 & 44.500 & 5.563\\
60 & 44.578 & 6.659\\
70 & 44.656 & 7.753\\
80 & 44.695 & 8.848\\
90 & 44.750 & 9.942\\
100 & 44.789 & 11.035\\
110 & 44.828 & 12.129\\
120 & 44.836 & 13.222\\
130 & 44.859 & 14.315\\
140 & 44.875 & 15.409\\
150 & 44.898 & 16.502\\
160 & 44.914 & 17.595\\
170 & 44.938 & 18.688\\
180 & 44.961 & 19.781\\
190 & 44.969 & 20.875\\
200 & 44.977 & 21.968\\
210 & 44.984 & 23.062\\
220 & 45.000 & 24.155\\
230 & 45.016 & 25.249\\
240 & 45.023 & 26.342\\
250 & 45.023 & 27.435\\
260 & 45.031 & 28.528\\
270 & 45.039 & 29.622\\
280 & 45.039 & 30.715\\
290 & 45.047 & 31.809\\
300 & 45.055 & 32.902\\
310 & 45.062 & 33.996\\
320 & 45.070 & 35.089\\
330 & 45.078 & 36.182\\
340 & 45.078 & 37.276\\
350 & 45.078 & 38.369\\
360 & 45.078 & 39.462\\
370 & 45.078 & 40.555\\
380 & 45.078 & 41.648\\
390 & 45.078 & 42.741\\
400 & 45.094 & 43.834\\
410 & 45.094 & 44.928\\
420 & 45.094 & 46.021\\
430 & 45.094 & 47.115\\
440 & 45.094 & 48.208\\
450 & 45.102 & 49.301\\
460 & 45.102 & 50.393\\
470 & 45.102 & 51.486\\
480 & 45.102 & 52.579\\
490 & 45.102 & 53.672\\
500 & 45.109 & 54.766\\
\hline
\end{tabular}}
\vspace{-0.2cm}
\caption{{pCQO-MIS ER results (average MIS size and total run time in minutes) including the number of batches used (column 1).}}
\label{tab: ER results with batches}
\end{table}

\end{document}